\documentclass[11pt,twocolumn]{article}
\usepackage{times,mathptmx}
\usepackage{fullpage}
\textwidth 6.5in
\textheight 9in
\baselineskip 13.2pt

\usepackage{amsthm,amsmath}
\usepackage{amssymb}

\usepackage[ruled]{algorithm}
\usepackage[noend]{algpseudocode}

\usepackage{epsfig}

\setcounter{secnumdepth}{3}

\newtheorem{definition}{Definition}
\newtheorem{theorem}{Theorem}
\newtheorem{corollary}{Corollary}
\newtheorem{lemma}{Lemma}

\newtheorem{proposition}{Proposition}

\newtheoremstyle{redstyle}
     {3pt}
     {3pt}
     {\color{black}}
     {}
     {\color{red}\bfseries}
     {:}
     {.5em}
     {}
\theoremstyle{redstyle}

\newcommand{\m}{\mathcal}
\newcommand{\cT}{{\mathcal T}}
\newcommand{\cN}{{\mathcal N}}

\newcommand{\ep}{\varepsilon}
\newcommand{\eps}{\varepsilon}

\newcommand{\pred}{\text{pred}}

\newcommand{\leader}{\text{ld}}
\newcommand{\DIR}{\text{DIR}}
\newcommand{\sselector}{SINR-selector}

\newcommand{\tju}[1]{} 
\newcommand{\labell}[1]{\label{#1}} 
\renewcommand{\paragraph}[1]{\vspace*{0.5ex}\noindent {\bf #1}}
\newcommand{\remove}[1]{}

\newcommand{\polylog}{\text{\! polylog \!}}
\newcommand{\dist}{\text{dist}}

\newcommand{\NAT}{{\mathbb N}}
\newcommand{\INT}{{\mathbb Z}}


\usepackage{color}

\newcommand{\tj}[1]{#1}
\newcommand{\comment}[1]{}

\newcommand{\LocalLeader}{\textsc{Local Leader Election}}
\newcommand{\LocalLearning}{\textsc{Local Learning}}
\newcommand{\NeighborL}{\textsc{Neighborhood Learning}}
\newcommand{\InterC}{\textsc{Inter-Box Communication}}
\newcommand{\state}{\text{st}}
\newcommand{\boxx}{\text{box}}

\newcommand{\forward}{\text{forward}}
\newcommand{\waitback}{\text{wait-back}}
\newcommand{\backward}{\text{back}}
\newcommand{\waitconf}{\text{wait-conf}}
\newcommand{\confirm}{\text{confirm}}
\newcommand{\stopp}{\text{stop}}
\newcommand{\suc}{\text{succ}}
\newcommand{\GlobalLeader}{GlobalLeader}
\newcommand{\MultiAlg}{Mutli-broadcast}

\begin{document}
\title{\bf Distributed backbone structure for deterministic algorithms in the
SINR model of wireless networks}
\author{    Tomasz Jurdzinski\footnotemark[2] \footnotemark[3]
    \and
    Dariusz R.~Kowalski\footnotemark[3] \footnotemark[4]
}
\date{}
\maketitle

\footnotetext[2]{Institute of Computer Science, U. Wroc{\l}aw, Poland.}

\footnotetext[3]{Department of Computer Science, U. Liverpool, UK.}

\footnotetext[4]{Institute IMDEA Networks, 28918 Madrid, Spain.}

\begin{abstract}
The Signal-to-Interference-and-Noise-Ratio (SINR) physical model is one of the legitimate models of wireless networks. Despite of the vast amount of study done in design and analysis of centralized algorithms supporting wireless communication under the SINR physical model,
little is known about distributed algorithms in this model, especially deterministic ones. In this work we construct, in a deterministic distributed way, a backbone structure on the top of a given wireless network, which can be used for transforming many algorithms designed in a simpler model of ad hoc broadcast networks without interference into the SINR physical model with uniform power of stations, without increasing their asymptotic time complexity.
The time cost of the backbone data structure construction  is only
$O(\Delta \polylog n)$ rounds, where $\Delta$ is roughly the inverse of network density and $n$ is the number of nodes in the whole network.
The core of the construction is a novel combinatorial structure called SINR-selector, which is introduced and constructed in this paper.
We demonstrate the power of the backbone data structure by using it for obtaining efficient
$O(D+\Delta \polylog n)$-round and $O(D+k+\Delta \polylog n)$-round deterministic distributed solutions
for leader election and multi-broadcast, respectively, where $D$ is the network diameter and $k$ is the number of messages to be disseminated.
\end{abstract}

\newcommand{\T}{\vspace*{-1ex}}

\T\T
\section{Introduction}

\T
In this work we study a fundamental problem how to transform algorithms designed and analyzed for
ad hoc networks without any interference (e.g., message passing or multicast networks)
to ad-hoc wireless networks under the
Signal-to-Interference-and-Noise-Ratio model (SINR).
A wireless network considered in this work consists of $n$ stations, also called nodes, with
uniform transmission powers, deployed in the two-dimensional
Euclidean space.
%
Stations act in synchronous rounds; in every communication round a station can either transmit
a message or listen to the wireless medium.
A communication (or reachability) graph of the network is the graph defined on network nodes
and containing links $(v,w)$ such that if $v$ is the only transmitter in the network then $w$ receives
the message transmitted by $v$ under the SINR physical model.
Each station initially knows only its own unique ID
in the range $\{1,\ldots,N\}$,
(estimated) location, and
parameters $N$ and $\Delta$, where $\Delta$ is the upper bound on the node degree
in the communication graph of the network, and corresponds roughly to the inverse
of the lower bound on network density.

We consider global communication tasks in the SINR wireless setting, and our objective is
to minimize time complexity (i.e., the number of rounds) of deterministic distributed solutions.
In order to overcome the impact of signal interference, we show how to compute a backbone data structure
in a distributed deterministic way, which also clusters nodes and implements efficient inter- and intra-cluster
communication. It allows to transform a variety of algorithms designed and analyzed in models with no-interference
to the SINR wireless model with only additive $O(\Delta \polylog n)$ overhead on time complexity.
We demonstrate such efficient transformation on two tasks: leader election and multi-broadcast
with small messages.


\T\T
\subsection{Previous and Related Results}

\T
\paragraph{SINR model.}
The Signal-to-Interference-and-Noise-Ratio
(SINR) physical model is currently the most popular framework deploying
physical wireless interference for the purpose of design and theoretical
analysis of wireless communication tasks.
There is a vast amount of work on centralized algorithms under the SINR model.
The most studied problems include connectivity, capacity maximization,
link scheduling types of problems (e.g.,\ \cite{FanghanelKRV09,Kesselheim11,AvinLPP09}).
See also the survey~\cite{WatSurv} for recent advances and references.

Recently, there is a growing interest in developing solutions to {\em local communication} problems,
in which stations are only required to exchange information with a subset of their neighbors.
Examples of such problems include local broadcast or local leader election.
A deterministic {\em local} broadcasting, in which nodes have to inform
only their neighbors in the corresponding reachability graph,
was studied in \cite{YuWHL11}.
The considered setting allowed power control by algorithms,
in which, in order to avoid collisions,
stations could transmit with any power smaller than the
maximal one.
Randomized solutions for contention resolution~\cite{KV10}
and local broadcasting~\cite{GoussevskaiaMW08} were also obtained.
%
%
Single hop topologies were also studied recently
under the SINR model, c.f.,~\cite{RichaSSZ}.

Randomized distributed solutions to local communication problems, as cited above,
can often be used as a basic tool for obtaining {\em randomized solutions} to {\em global communication} tasks,
i.e., tasks requiring information exchange throughout the whole network.
Examples of such task include multi-broadcast or leader election.
These problems are also related, though often not equivalent, to several graph-related problems
of finding a maximal/maximum independent set, minimal/minimum (connected) dominating set
(our backbone structure is an example of the latter, with additional useful properties).
Recently, an efficient randomized distributed solution was obtained to the problem
of finding
a constant-density dominating set
by Scheideler et al.~\cite{ScheidelerRS08}, however the model of that paper is slightly different than the one in this work,
i.e., it combines SINR with radio model.

Surprisingly, to the best of our knowledge, there is no published throughout study on
{\em deterministic distributed} solutions to any fundamental {\em global communication} problem
under the SINR model.\footnote{%
Very recently, we designed algorithms for a different setting and problem,
in which a single source must wake up the network by a broadcast message, c.f.,~\cite{JK12broadcast}
(unpublished manuscript, submitted).
That problem is very different in nature from the ones considered in this work,
and appeared more costly for $D,\Delta=o(n)$ (unless an additional knowledge
about locations of neighbors is provided to every node).
%
}
Such algorithms are especially important from perspective of maintaining network infrastructure
and applications to distributed systems.
Deterministic solutions are often desired in such cases, due to their reliability.
One could apply a naive round-robin algorithm to build a collision-free
solution to such problems,
but apparently it is extremely inefficient.

%
%


\remove{

\item
z prac o connectivity w SINR, podaje cos co traktuje o uniform power:
\cite{AvinLPP09} (stala liczba kolorow, ale stacje tylko w wezlach gridu);
\cite{AvinLP09} (o tym, ze uniform niewiele gorsze od nonuniform);
\item
w surveyu Wattenhoffera i in. jest cala kolekcja wynikow na temat one-slot scheduling
i multi-slot scheduling offline (\textbf{scentralizowany}) dla modelu uniform: NP-zupelnosc, algorytmy
aproksymacyjne... a z algorytmow rozproszonych wymieniaja glownie:
\cite{GoussevskaiaMW08} o local broadcasting zrandomizowanym (``each node performs a successful local
broadcasting in time proportional to the number of neighbors
in its physical proximity''); \cite{LebharL09} traktuje o uniform (udg): nie doczytalem dokladnie,
ale chodzi o zrandomizowana symulacje collision-free (?) UDG w modelu SINR przy jednostajnym rozkladzie
wierzcholkow w ustalonym kwadracie...
\end{itemize}
%
%
%
%
%
%
}

\paragraph{Radio network model.}
%
In the related {\em radio model} of wireless networks,
a message is successfully heard if there are no other simultaneous transmissions
from the {\em neighbors} of the receiver in the communication graph.
This model does not take into account the real strength of the received signals, and also the signals
from the outside of some close proximity.
In the geometric ad hoc setting, Dessmark and Pelc~\cite{DessmarkP07} were the first who studied
global communication problems, mainly broadcasting.
They analyzed the impact of local knowledge, defined as a range within which
stations can discover the nearby stations.
%
%
%
Emek et al.~\cite{EmekGKPPS09} designed a broadcast algorithm
working in time $O(Dg)$
in {\em unit-disc graph (UDG)} radio networks with eccentricity $D$ and granularity $g$, where
eccentricity was defined as the minimum number of hops to propagate the broadcast message throughout
the whole network and
granularity was defined as the
inverse of the minimum distance between any two stations.
Later, Emek et al.~\cite{EmekKP08} developed a matching lower bound $\Omega(Dg)$.
%
%
%
%
Leader election problem for geometric radio networks was studied e.g., by
Chung et al.\ \cite{ChungRW11} in the case of mobile devices.

Communication problems are well-studied in the setting of {\em graph radio model}, in which stations
are not necessarily deployed in a metric space.
%
Due to limited space and the fact that this research area is not directly related to the core of this work,
we refer the reader to the recent
literature on
deterministic~\cite{Censor-HillelGKLN11, CzumajRytter-FOCS-03, DeMarco-SICOMP-10, GalcikGL09, KP-DC-07, KowalskiP09}.
and randomized~\cite{Bar-YehudaGI92,CzumajRytter-FOCS-03, KP-DC-07,KushilevitzM98} solutions.

\T\T
\subsection{Our Results}



\T
The main result of this work is a deterministic and distributed algorithm
constructing a complex backbone distributed data structure, consisting of a combinatorial
structure with local operations on it, that is aimed to support wireless global communication tasks;
for brief description see Section~\ref{s:model}, and for detail implementation and
properties we refer the reader to Section~\ref{s:backbone}.
The construction is in $O(\Delta \polylog n)$ rounds.
The algorithm constructing the backbone network use a novel concept of SINR-selectors,
which are specific efficient schedules for ad hoc one-hop communication.
We define and construct them in Section~\ref{s:selector}.

Our work can be also viewed as a deterministic distributed implementation of a MAC layer,
introduced by Kuhn et al.~\cite{KuhnLN11}, under the SINR model.
It also allows to transform several algorithms designed for networks without interference
to the SINR wireless model, with an additive overheads $O(\Delta \polylog n)$ (coming from
spanning the backbone data structure),
and the number of required parallel intra-cluster convergecast communication tasks multiplied by $O(\Delta)$.

In many cases the number of such parallel local convergecasts can be lowered to $\polylog n$
or even a constant. As examples, we show that this can be done for the problems of leader election
and multi-broadcast (with small messages), and result in almost optimal
deterministic distributed $O(D+\Delta \polylog n)$-round and $O(k+D+\Delta \polylog n)$-round algorithms,
respectively, c.f., Section~\ref{s:alg}.
They are shown to be optimal in the SINR model up to a polylogarithmic factor,
i.e., they require $\Omega(D+\Delta)$ and $\Omega(k+D+\Delta)$, respectively, c.f., Section~\ref{s:lower}.
In particular, for the purpose of multi-broadcast protocol we can transform a property of greedy geometric
routing, proved for networks without collisions in~\cite{CidonKMP95}, to the SINR
model.


\vspace*{-3ex}
\section{Model and Notation}
\label{s:model}

\T
Throughout the paper, $\NAT$ denotes the set of natural numbers and $\INT$
denotes the set of integers.
For $i,j\in\NAT$, we use the notation $[i,j]=\{k\in\NAT\,|\,i\leq k\leq j\}$
and $[i]=[1,i]$.
%

We consider a wireless network consisting of $n$ {\em stations}, also called {\em nodes},
deployed into a two dimensional
Euclidean space and communicating by a wireless medium.
%
%
All stations have unique integer IDs in set $[N]$,
where $N$ is an integer model parameter. 
Stations
are denoted by letters $u, v, w$, which simultaneously
denote their IDs.
%
%
%
Stations are located on the plane with {\em Euclidean metric} $\dist(\cdot,\cdot)$,
and each station knows its coordinates.
%
%
Each station $v$ has its {\em transmission power} $P_v$, which is a positive real number.
There are three fixed model parameters: path loss
$\alpha> 2$,
threshold $\beta\ge 1$, and ambient noise $\cN\ge 1$.
The $SINR(v,u,\cT)$ ratio, for given stations $u,v$ and a set of (transmitting) stations $\cT$,
is defined as follows:
\vspace*{-2ex}
\begin{equation}\label{e:sinr}
SINR(v,u,\cT)
=
\frac{\frac{P_v}{\dist(v,u)^{\alpha}}}{\cN+\sum_{w\in\cT\setminus\{v\}}\frac{P_w}{\dist(w,u)^{\alpha}}}
\end{equation}

\vspace*{-1.5ex}
In the {\em Signal-to-Interference-and-Noise-Ratio model} (SINR) considered in this work,
station $u$ {\em successfully receives}, or {\em hears}, a message from station $v$ in a round if
$v\in \cT$, $u\notin \cT$, and:
\begin{itemize}
\vspace*{-1.3ex}
\item
$SINR(v,u,\cT)\ge\beta$, where $\cT$ is the set of stations transmitting at that time, and
\vspace*{-1.3ex}
\item
$P_v\dist^{-\alpha}(v,u)\geq (1+\eps)\beta\cN$,
\end{itemize}
\vspace*{-1.3ex}
where $\eps>0$ is a fixed {\em sensitivity parameter} of the model.
The above definition is common in the literature, c.f.,~\cite{KV10}.\footnote{%
The first condition is a straightforward application of the SINR ratio,
comparing strength of one of the received signals with the remainder.
The second condition enforces the signal to be sufficiently strong in order to be
distinguished from the background noise, and thus to be decoded.
}

\remove{
As the first of the above
conditions is a standard formula defining SINR model in the literature, the second condition
is less obvious. Informally, it states that reception of a message at a station $v$ is possible
only if the power received by $u$ is at least $(1+\eps)$ times larger than the minimum power
needed to deal with ambient noise. This assumption is quite common in the literature
(c.f.,\ \cite{KV10}), for two reasons.
First, it captures the case when the ambient noise, which in practice is of random nature,
may vary by factor $\eps$ from its mean value $\cN$ (which holds with some meaningful
probability).
Second, the lack of this assumption trivializes many communication tasks; for example,
in case of \tj{any problem that requires that a message from one station is transmitted to all
other stations}, the lack of this assumption implies
a trivial lower bound $\Omega(n)$ on time complexity, even for shallow network
topologies of eccentricity
$O(\sqrt{n})$ (i.e., of $O(\sqrt{n})$ hops) and for centralized and randomized algorithms.\footnote{%
Indeed, assume that we have a network whose all vertices
form a grid
$\sqrt{n}\times \sqrt{n}$ such that $P_v=1$ for each station $v$ and
distances between consecutive elements of the grid
are $(\beta\cdot\cN)^{1/\alpha}$ (that is, the power of the signal received by each
station is at most equal to the ambient noise).
If the constraint
$P_v\dist^{-\alpha}(v,u)\geq (1+\eps)\beta\cN$ is not required for reception
of the message, the source message can still be sent to each station of the network. However,
if more than one station is sending a message simultaneously, no station in the
network receives a message.
}
}

In the paper, we make the following assumptions for the sake of
clarity of presentation: $\beta=1$,
$\cN=1$.
In general, these assumptions can be dropped without harming the asymptotic performances of
the presented algorithms and lower bounds formulas.
(Each time when these values may have an impact on the actual asymptotic complexity of algorithms
or lower bounds, we will discuss it in the paper.)

\paragraph{Ranges and uniformity.}
The {\em communication range} $r_v$ of a station $v$ is the radius of the circle in which a message transmitted
by the station is heard, provided no other station transmits at the same time.
A network
is
{\em uniform} when ranges (and thus transmission powers) of all stations are equal,
or {\em nonuniform} otherwise.
In this paper, only uniform networks are considered.
For clarity of presentation
we make an assumption that all powers are equal to $1$, i.e., $P_v=1$ for each $v$.
The assumed value $1$ of $P_v$ can be replaced by any fixed positive value without changing
asymptotic formulas for presented algorithms and lower bounds.
Under these assumptions, $r_v=r=(1+\eps)^{-1/\alpha}$
for each station $v$.
%
%
%
%
The {\em range area} of a station with range $r$
located at point $(x,y)$ is defined as a ball of radius $r$ centered at $(x,y)$.

\noindent
{\bf Communication graph and graph notation.}
The {\em communication graph} $G(V,E)$, also called the {\em reachability graph}, of a given network
consists of all network nodes and edges $(v,u)$ such that $u$ is in the range area of $v$.
%
Note that the communication graph is symmetric for uniform networks, which are considered
in this paper.
By a {\em neighborhood} $\Gamma(u)$ of a node $u$ we mean the set
of all 
neighbors of $u$, i.e., the set $\{w\,|\, (w,u)\in E\}$ in the communication graph $G(V,E)$
of the underlying network.
The {\em graph distance} from $v$ to $w$ is equal to the length of a shortest path from $v$ to $w$
in the communication graph (where the length of a path is equal to the number of its edges).
The {\em eccentricity} of a node
is the maximum graph
distance from this node to all other nodes
(note that the eccentricity is of order of the diameter if the communication
graph is symmetric --- this is also the case in this work).
%

We say that a station $v$ transmits {\em $c$-successfully} in a round
$t$ if $v$ transmits a message in round $t$ and this message is heard by
each station $u$ in a distance smaller or equal to $c$ from $v$. We say that a station
$v$ transmits {\em successfully} in round $t$ if it transmits $r$-successfully, i.e.,
each of its neighbors in the communication graph can successfully receive its message.
Finally, $v$ transmits {\em successfully} to $u$ in round $t$ if $v$ transmits
a message in round $t$ and $u$ successfully receives this message.

\paragraph{Synchronization.}
It is assumed that algorithms work synchronously in rounds, each station can
either act as a sender or as a receiver in a round.
All stations are active in the beginning of computation.
%

\paragraph{Collision detection and channel sensing.}
We consider the model without {\em collision detection}, that is,
if a station $u$ does not receive a message in a round $t$, it gets no information from the
physical wireless layer
whether any other station was transmitting
in that round
and about the value of $SINR(v,u,\m{T})$, for any station $u$, where $\m{T}$ is the set
of transmitting stations in round $t$.


\noindent
{\bf Communication problems and complexity parameters.}
We consider two global communication problems:
leader election and multi-broadcast.
{\em Leader election} is defined in the following way: initially all
stations of a network have
the same status non-leader and the goal is for all nodes but one to keep this status
and for the remaining single node to get the status leader. Moreover, all nodes must
learn the ID of the leader.

The {\em multi-broadcast} problem is to disseminate $k$ distinct 
messages initially stored at arbitrary nodes,
to the entire network (it is allowed that more than one message is stored
in a node). 

As a generic tool for these and possibly other global communication tasks,
we design a deterministic distributed algorithm building a backbone data structure, which
is a connected dominating set (i.e., a connected subgraph such that each node
is either in the subgraph or is a neighbor of a node in the subgraph, also called a CDS)
with constant degree, constant approximation of
the size of a smallest CDS,
and diameter proportional to the diameter of the whole communication graph.
Additionally, our algorithm organizes all network nodes in
a graph of local clusters, and computes
efficient schedules for inter- and intra-cluster communication. More precisely:
the computed inter-cluster communication schedule works in contact number of rounds
and can be done in parallel without a harm from the interference to the final result,
similarly as broadcasting intra-cluster schedules;
the convergecast intra-cluster operations, although can be done in parallel, require $\Theta(\Delta)$
rounds, where $\Delta$ is the upper bound on the node 

For the sake of complexity formulas, in the analysis we consider the following parameters:
$n$, $N$, $D$, $\Delta$ where:
$n$ is the number of nodes,
$[N]$ is the range of IDs,
$D$ is the 
diameter
and
$\Delta$ is the upper bound on the
degree of a station in the communication graph
(i.e., roughly, the inverse of the lower bound on the density of nodes).

\paragraph{Messages.}
In general, we assume that a single message
sent in the execution of any algorithm
can carry
a single rumor (in case of multi-broadcast)
and at most logarithmic, in the
size of the ID range $N$, number of control bits.

\paragraph{Knowledge of stations}
Each station knows its own ID, location, and parameters $N$, $\Delta$.
We assume that stations do not know any other information
about the topology of the network at the beginning of the execution of an algorithm.



\section{Technical Preliminaries and SINR-Selectors}
\label{s:selector}

\T
Given a parameter $c>0$, we define 
a partition of the $2$-dimensional space
into square boxes of size $c\times c$ by the grid $G_c$, in such a way that:
all boxes are aligned with the coordinate axes,
point $(0,0)$ is a grid point,
each box includes its left side without the top
endpoint and its bottom side without the right endpoint and
does not include its right and top sides.
We say that $(i,j)$ are the coordinates
of the box with its bottom left corner located at $(c\cdot i, c\cdot j)$,
for $i,j\in \INT$. A box with coordinates
$(i,j)\in\INT^2$ is denoted $C(i,j)$.
As observed in \cite{DessmarkP07,EmekGKPPS09}, the {\em grid} $G_{r/\sqrt{2}}$
is very useful in design of algorithms for geometric radio networks, provided
$r$ is equal to the range of each station.
This follows from the
fact that $r/\sqrt{2}$ is the largest parameter of a grid such that each
station in a box is
in the range of every other station in that box.
In the following, we fix $\gamma=r/\sqrt{2}$, where $r=(1+\eps)^{-1/\alpha}$, and call $G_{\gamma}$
the {\em pivotal grid}. If not stated otherwise, our considerations will
refer to (boxes of) $G_{\gamma}$.

Two boxes $C,C'$ are {\em neighbors} 
if there are
stations $v\in C$ and $v'\in C'$ such that edge $(v,v')$ belongs to the
communication graph of the network.
%
For a station $v$ located in position $(x,y)$ on the plane we define its {\em grid
coordinates} with respect to the grid $G_c$ as the pair of integers $(i,j)$ such that the point $(x,y)$ is located
in the box $C(i,j)$ of the grid $G_c$ (i.e., $ic\leq x< (i+1)c$ and
$jc\leq y<(j+1)c$).
Moreover, $\boxx(v)=C(i,j)$ for a station $v$ with grid coordinates $(i,j)$.
If not stated otherwise, we will refer to grid coordinates with respect
to the pivotal grid.

A (classical) {\em communication schedule} $\mathcal{S}$ of length $T$
wrt  $N\in\NAT$ is a mapping
from $[N]$ to binary sequences of length $T$.
A station
with identifier $v\in[N]$ {\em follows}
the schedule $\m{S}$ of length $T$ in a fixed period of time consisting of $T$ rounds,
when
$v$ transmits a message in round $t$ of that period iff
the $t$th position of
$\m{S}(v)$ is equal to $1$.

A {\em geometric communication schedule} $\mathcal{S}$ of length $T$
with parameters $N,\delta\in\NAT$, $(N,\delta)$-gcs for short, is a mapping
from $[N]\times [0,\delta-1]^2$ to binary sequences of length $T$.
Let $v\in[N]$ be a station whose grid coordinates
with respect to
the grid $G_c$ are equal to $(i,j)$.
We say that $v$ {\em follows}
$(N,\delta)$-gcs $\m{S}$ 
for the grid $G_c$
in a fixed period of time, 
when $v$ transmits a message in round $t$ of that period iff
the $t$th position of
$\m{S}(v,i\mod \delta,j\mod\delta)$ is equal to $1$.
A set of stations $A$ on the plane is {\em $\delta$-diluted} wrt $G_c$, for $\delta\in\NAT\setminus\{0\}$, if
for any two stations $v_1,v_2\in A$ with grid coordinates $(i_1,j_1)$ and $(i_2,j_2)$, respectively,
the relationships $(|i_1-i_2|\mod \delta)=0$ and $(|j_1-j_2|\mod \delta)=0$ hold.
Since we will usually apply dilution to the pivotal grid, it is assumed that all references
to a dilution concern that grid, unless stated otherwise.

Let $\m{S}$ be a classical communication schedule $\m{S}$ wrt $N$ of length $T$,
let $c>0$ and $\delta>0$, $\delta\in\NAT$.
A $\delta$-dilution of $\m{S}$ wrt $(N,c)$ is a $(N,\delta)$-gcs (geometric communication schedule)
$\m{S}'$ of length $T\cdot \delta^2$ defined such such that the bit $(t-1)\delta^2+a\delta+b$
of $\m{S}'(v,a,b)$ is equal to $1$ iff the bit $t$ of $\m{S}(v)$
is equal to $1$. In other words, each round $t$ of $\m{S}$ is
partitioned
into $\delta^2$ rounds $(t,a,b)$
of $\m{S}'$ indexed by pairs $(a,b)\in [0,\delta-1]^2$, such that a station
with grid coordinates $(i,j)$ in $G_c$ is
allowed to send messages only in rounds $(t,i\mod\delta,j\mod\delta)$,
provided schedule $\m{S}$ admits a transmission of this station in its round~$t$.

\begin{figure}
\begin{center}
\epsfig{file=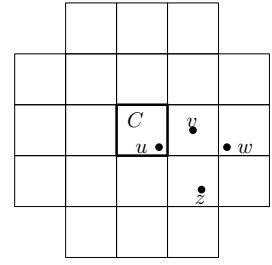, scale=0.8}\vspace*{-2.2ex}
\end{center}
\caption{If $v,w,z$ are in the range are of $u$, then boxes
containing $v,w,$ and $z$ are neighbors of $C$.
}
\label{fig:adjacent}
\vspace*{-1.85ex}
\end{figure}%
Observe that, since ranges of stations are equal to the length
of diagonal of boxes of the pivotal grid, a box $C(i,j)$ can have at most
$20$ neighbors (see Figure~\ref{fig:adjacent}).
We define the set $\DIR\subset[-2,2]^2$ such  that $(d_1,d_2)\in\DIR$ iff
it is possible that boxes with coordinates $(i,j)$ and $(i+d_1,j+d_2)$
can be neighbors.


A set of stations $A$ on the plane is {\em $\delta$-diluted} wrt $G_c$, for $\delta\in\NAT$, if
for any two stations $v_1,v_2\in A$ with grid coordinates $(i_1,j_1)$ and $(i_2,j_2)$, respectively,
the relationships $|i_1-i_2|\mod \delta=0$ and $|j_1-j_2|\mod \delta=0$ are satisfied.

\begin{proposition}\labell{prop:dilsuc}
For each $\alpha>2$ ($\alpha=2$, resp.) there exists a constant $d$ ($d=O(\log n)$, resp.)
such that the following
property is satisfied: \\
Let $A$ be a
$\delta$-diluted, wrt the pivotal grid $G_{\gamma}$,
set of stations on the plane such that $\delta\geq d$ and each box
contains at most one element of $A$. Then, if all elements of $A$ transmit
messages simultaneously in the same round $t$ and no other station is transmitting,
each of them transmit successfully.\\
\end{proposition}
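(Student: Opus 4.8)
The plan is to show that when the transmitting set $A$ is $\delta$-diluted for sufficiently large $\delta$ and contains at most one station per box, the interference at any intended receiver is bounded by a geometric series that can be made arbitrarily small by increasing $\delta$; hence both reception conditions of the SINR model hold. First I would fix a transmitter $v\in A$ and a receiver $u$ with $\dist(v,u)\le r$ (so $u$ is a station we need $v$'s message to reach for $v$ to transmit successfully, i.e.\ $u$ lies in the range area of $v$). Since each box of the pivotal grid $G_\gamma$ has diameter $r$, both $v$ and $u$ lie within a bounded number of boxes of each other; in particular $u$ lies in $\boxx(v)$ or one of its $\le 20$ neighbors. The signal received by $u$ from $v$ is $\dist(v,u)^{-\alpha}\ge r^{-\alpha}=(1+\eps)\beta\cN$, which (with $\beta=\cN=1$) immediately gives the second reception condition $P_v\dist^{-\alpha}(v,u)\ge(1+\eps)\beta\cN$ with room to spare. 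It remains to bound the interference $I=\sum_{w\in A\setminus\{v\}}\dist(w,u)^{-\alpha}$.

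The key step is the interference estimate. Because $A$ is $\delta$-diluted, any other transmitter $w\in A\setminus\{v\}$ has grid coordinates differing from those of $v$ by a nonzero multiple of $\delta$ in at least one coordinate; combined with the one-station-per-box condition, this means the transmitters other than $v$ sit on a sublattice of spacing $\delta\gamma$ (in grid units), and there are at most $O(m)$ of them at grid-distance roughly $m\delta$ from $v$, for each $m\ge 1$. Since $u$ is within $O(1)$ boxes of $v$, a transmitter $w$ at grid-distance $\ge m\delta$ from $v$ is at Euclidean distance $\ge (m\delta - O(1))\gamma$ from $u$. Grouping the transmitters into annuli indexed by $m$, I would bound
\[
I \;\le\; \sum_{m\ge 1} \frac{c_1 m}{\big((m\delta - c_2)\gamma\big)^{\alpha}}
\;\le\; \frac{c_3}{\gamma^\alpha\,\delta^{\alpha}}\sum_{m\ge 1} \frac{1}{m^{\alpha-1}},
\]
for absolute constants $c_1,c_2,c_3$, valid once $\delta$ is large enough that $m\delta - c_2\ge m\delta/2$ for all $m\ge 1$. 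For $\alpha>2$ the series $\sum_m m^{-(\alpha-1)}$ converges, so $I\le C/(\gamma^\alpha\delta^\alpha)$ for an absolute constant $C=C(\alpha)$; choosing $\delta\ge d$ with $d$ depending only on $\alpha$ (and on the fixed $\beta,\cN,\eps$) makes $I$ small enough that
\[
SINR(v,u,A)=\frac{\dist(v,u)^{-\alpha}}{\cN+I}\;\ge\;\frac{r^{-\alpha}}{\cN+I}\;\ge\;\beta,
\]
which is the first reception condition. When $\alpha=2$ the series $\sum_{m=1}^{M} 1/m$ is $\Theta(\log M)$; the transmitters are also confined to the region of the plane populated by the $n$ network nodes, so $M=O(\sqrt n)$ suffices and the bound becomes $I=O(\log n/(\gamma^2\delta^2))$, whence $\delta=\Theta(\sqrt{\log n})$, i.e.\ $d=O(\log n)$ as claimed (in fact $d=O(\sqrt{\log n})$ would do, but the weaker statement is all that is needed).

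The main obstacle is making the annulus-counting rigorous: one must carefully translate the $\delta$-dilution condition on grid coordinates, together with the at-most-one-per-box hypothesis, into the statement that the number of transmitters in the $m$-th annulus around $u$ is $O(m)$, accounting for the $O(1)$ offset between $v$'s box and $u$'s box and for the fact that dilution only forces one of the two coordinate-differences to be a multiple of $\delta$ (so transmitters can be aligned in a "cross" pattern through $v$, not merely on a full sublattice). This amounts to a routine but slightly fiddly geometric packing argument, and the rest of the proof — verifying the two reception inequalities once $I$ is bounded — is immediate given the choice $\gamma=r/\sqrt2$ and the normalization $\beta=\cN=1$. I would also note that the argument is symmetric in the elements of $A$, so the conclusion holds simultaneously for every transmitter, giving that each of them transmits successfully.
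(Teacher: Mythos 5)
Your proposal is correct and follows essentially the same route as the paper: bound the interference at any receiver by grouping the other transmitters into annuli of the diluted sublattice, use that the $m$-th annulus contains $O(m)$ occupied boxes each at distance $\Omega(m\delta\gamma)$ from $u$, and conclude via $\sum_m m^{1-\alpha}=O(1)$ for $\alpha>2$ and the harmonic bound $O(\log n)$ for $\alpha=2$, after which both reception conditions follow from $r^{-\alpha}=(1+\eps)$. The one "obstacle" you flag is not actually there: the paper's definition of $\delta$-dilution requires \emph{both} coordinate differences to be $\equiv 0 \pmod{\delta}$, so the transmitters do lie on a full sublattice (no cross pattern), and the $\le 8m$ boxes at sublattice Chebyshev distance $m$ give exactly the counting you need.
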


\T\T\T
The correctness of the above proposition simply follows from the fact that
$\sum_{i=1}^{\infty} 1/i^{\sigma}=O(1)$ for each $\sigma>1$,
and $\sum_{i=1}^{n} 1/i=O(\log n)$ for $n\in\NAT$ (these sums
appear when one evaluates maximal possible
interference
generated by
other elements of $A$ to a box containing any $v\in A$).

%

\T
\begin{definition}[SINR selector]
A $(N,\delta)$-gcs $\mathcal{S}$, for $N,\delta\in\NAT$, is a
$(N, \delta,
\Delta,\varepsilon)$-SINR-selector
if for each
set $A$ of $m\leq N$ stations with IDs in the range $[N]$,
located on the plane such that at most $\Delta$ stations are located
in each cell of the pivotal grid, it satisfies the following properties:\\
\textbf{(a)}~At
least $\ep\cdot |A|$ stations from $A$ broadcast successfully during
the execution of $\mathcal{S}$.\\
\textbf{(b)}~Let $B\subseteq A$ be the set of such elements $v\in A$ that $v$ is \textbf{not}
the only element of $\boxx(v)$ which belongs to $A$. Then,
at least $\ep\cdot |B|$ stations from $B$ broadcast successfully during
the execution of~$\mathcal{S}$.
\end{definition}

\comment{ 
A {\em box-aware relocation} of a set of stations $A$ on the plane is
a
placement of elements with IDs from $A$
such
that grid coordinates
of each element of $A$ remain the same as in the original location.

Let $\m{S}$ be a $(N,\delta,\Delta,\varepsilon)$-SINR-selector of length $T$.
We say that $\m{S}$ is {\em placement independent}
if for each set $A$ of $m\leq N$ stations with IDs in the range $[N]$
there exists $A'=\{v_1,\ldots,v_p\}\subseteq A$ of size $p\geq \varepsilon\cdot m$ and
a set of rounds $t_1<\cdots<t_p\leq T$
such
that each station with ID
$v_i$ broadcasts successfully in round $t_i$
in the
execution of $\m{S}$
on each box-aware relocation of $A$.
}

\begin{lemma}\labell{l:geometric-selector}
For each $N\in\NAT$ and $\Delta\leq N$, there exists a 
$(N,\delta,\Delta,1/2)$-SINR-selector $\m{S}$
of size:
$O(\Delta\log^2 N)$ for $\alpha>2$,
and $O(\Delta\log^3 N)$ for $\alpha=2$,
where $\delta=O(\log N)$. Moreover, $\m{S}$ is a $\delta$-dilution of
some classical communication schedule.
\end{lemma}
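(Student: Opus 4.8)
The plan is to separate two difficulties. The geometric one — interference from distant simultaneous transmitters — is dealt with wholesale by dilution, via Proposition~\ref{prop:dilsuc}, while the combinatorial one — up to $\Delta$ stations in a single box — is handled by a family of selectors at geometrically growing density scales. Concretely, I would first reduce the lemma to a purely combinatorial statement: there is a \emph{classical} schedule $\m{T}$ over $[N]$, of length $O(\Delta\log^2 N)$, such that for every placement $A$ of stations with at most $\Delta$ per box of $G_{\gamma}$, at least $|A|/2$ stations of $A$ are, in some round $t$, the unique round-$t$ transmitter inside their own box, and in a round $t$ that is moreover \emph{clean}: every box with the same grid coordinates modulo $\delta$ also holds at most one round-$t$ transmitter, where $\delta$ is the parameter of Proposition~\ref{prop:dilsuc} ($O(1)$ for $\alpha>2$, $O(\log N)$ for $\alpha=2$). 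Granting this, the $\delta$-dilution $\m{S}$ of $\m{T}$ is the required SINR-selector: a round $t$ of $\m{T}$ becomes $\delta^2$ sub-rounds, in the sub-round keyed by a station's residue mod $\delta$ the active transmitters are exactly the round-$t$ transmitters of boxes of that residue, this set is $\delta$-diluted and — by cleanliness — has at most one station per box, so Proposition~\ref{prop:dilsuc} makes all of them, including the isolated ones, transmit successfully. The $\delta^2$ blow-up is $O(1)$ for $\alpha>2$ and poly-logarithmic for $\alpha=2$, which accounts for the two length bounds, and $\m{S}$ is a $\delta$-dilution of a classical schedule by construction.

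For the combinatorial core I would build $\m{T}$ from selectors tuned to densities $2^0,2^1,\dots,2^{\lceil\log\Delta\rceil}$. Using, for each $k=2^s$, a family of $O(k\log(N/k))$ subsets of $[N]$ that isolates at least a $\frac{4}{5}$-fraction of every $k$-element set (i.e.\ for at least $\frac{4}{5}k$ of its elements some member of the family meets the set in exactly that element), and concatenating these families over all $s\le\lceil\log\Delta\rceil$, one obtains a classical schedule of length $\sum_s O(2^s\log N)=O(\Delta\log N)$. The key estimate is local: for a box $C$ with $\ell=|A\cap C|\le\Delta$, pick the scale $s$ with $2^{s-1}<\ell\le 2^s$ and pad $A\cap C$ with dummies to size $2^s$; since an isolation inside the padded set is in particular an isolation inside $A\cap C$, the scale-$s$ family isolates at least $\frac{4}{5}2^s-(2^s-\ell)=\ell-\frac{1}{5}2^s\ge\ell-\frac{2\ell}{5}=\frac{3\ell}{5}$ stations of $A\cap C$ within $A\cap C$. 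Summing over boxes isolates at least $\frac{3}{5}|A|$ stations of $A$, and since every box with at least two occupants contributes all of them to the set $B$ of the definition, the same count isolates at least $\frac{3}{5}|B|$ stations of $B$; choosing the selector constants with a little slack, enough of these isolations survive the cleanliness requirement below to leave $\ge|A|/2$ and $\ge|B|/2$ successful stations, which are properties (a) and (b).

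The step I expect to be the main obstacle is exactly the cleanliness clause — reconciling the combinatorial isolation with the dilution. An isolation of $v$ within its own box at round $t$ is worthless if some \emph{other} box of the same residue carries two or more round-$t$ transmitters, since even one such box, arbitrarily far away, contributes $\Theta(\Delta/\delta^{\alpha})$ to the interference near $v$, dwarfing the $O(1/\delta^{\alpha})$ per-box bound behind Proposition~\ref{prop:dilsuc}; and one cannot ask every round to be clean for a whole residue class, which may span arbitrarily many boxes. The plan is to build $\m{T}$ probabilistically — the scale-$s$ block obtained by sampling each ID independently with probability $\Theta(2^{-s})$, with an extra poly-logarithmic number of independent repetitions — and to argue that, with positive probability, for every placement $A$ a constant fraction of $A$ has an isolating round that is simultaneously clean for the boxes of its residue class; the natural tool here is an amortized charging of the spoiled isolations against the crowded boxes responsible for them (the extra sparsity of active boxes when $\delta$ itself is poly-logarithmic, i.e.\ the $\alpha=2$ case, makes this easier), rather than a crude union bound. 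Those extra repetitions are what lift the length from the interference-free $O(\Delta\log N)$ to the claimed $O(\Delta\log^2 N)$. A standard derandomization, or a direct deterministic construction of selectors with these parameters, then turns $\m{T}$ into an explicit schedule and finishes the proof.
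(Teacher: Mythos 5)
You correctly identify the reconciliation of per‑box isolation with cross‑box interference as the main obstacle, but in your proposal this step is precisely the one that is missing, and the ``cleanliness'' formulation you reduce to is not attainable. A residue class modulo $\delta$ may contain arbitrarily many occupied boxes; at your scale‑matched sampling density every box with at least two stations independently hosts two or more transmitters with constant probability, so the probability that a round is clean for an entire residue class decays exponentially in the number of occupied boxes, and no polylogarithmic number of extra repetitions (nor the sketched ``amortized charging'', which is never carried out) can restore it. The multi‑scale construction aggravates this: during the scale‑$s$ block a box of load $\Delta$ carries $\Theta(\Delta/2^s)$ transmitters in expectation, so with the constant dilution you allot for $\alpha>2$ a single heavy box in the same residue class already injects interference of order $\Delta/(2^s\delta^{\alpha})$ near the light boxes — exactly the boxes whose scale‑$s$ isolations you are counting, and where the set $B$ of property (b) may be concentrated. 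So both the combinatorial statement you reduce to and the route to it would have to change; as written the core of the lemma is unproven.

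The paper's proof avoids cleanliness altogether. It samples one classical schedule at the uniform density $1/\Delta$ (no scales), so every box has expected at most one transmitter per round; Chernoff bounds give that, in each round, at least $\frac{m}{128\Delta}$ boxes contain exactly one transmitter (from a designated half $A'$ of $A$) while the number of boxes with between $2^j$ and $2^{j+1}$ transmitters is at most $\frac{m}{\Delta 2^j}$ for every $j$. The dilution parameter is then taken \emph{superconstant even for $\alpha>2$}, namely $d=\Theta(\log^{1/\alpha}N)$ (and $\Theta(\log N)$ for $\alpha=2$), so that the \emph{total} noise summed over all nonempty boxes is $O\bigl(\frac{m}{\Delta}\cdot\frac{\log m}{d^{\alpha}}\bigr)\le \frac{m}{4\cdot 128\Delta}$; hence at most $\frac{m}{2\cdot 128\Delta}$ boxes see noise above $1/2$ and at least half of the isolated boxes succeed — an averaging argument in place of a clean‑round requirement. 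A union bound over the $O(\Delta\log N)$ rounds, the $\binom{N}{m}$ ID sets, the $m^{2m}$ discretized placements inside an $m\times m$ box matrix and the $\binom{m}{m/2}$ choices of $A'$ yields a classical schedule of length $O(\Delta\log N)$ valid for all $d$-diluted instances, and $d$-dilution of that schedule (a factor $d^2$) gives the stated $O(\Delta\log^2 N)$, resp.\ $O(\Delta\log^3 N)$. Your sketch also omits two points the paper needs: the concentration step requires $m\ge 2\Delta\log\log N$, the small‑$m$ regime being covered by interleaving a standard $(N,k,\varepsilon)$-selector; and property (b) is handled not by per‑box counting but by charging the interference of $A\setminus B$ to an enlarged background noise, absorbed by a slightly larger dilution. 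Finally, no derandomization is needed: the lemma is purely existential.
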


%
The proof of Lemma~\ref{l:geometric-selector} is given in Appendix~\ref{s:GSelectorA}.
It
is based on the probabilistic method. Consider a randomly
generated classical communication schedule $S$, such that each $v\in[N]$ is chosen to
be a transmitter in round $i$ with probability $\frac{1}{\Delta}$.
The goal is to show that it satisfies the properties
of a {\sselector} after applying $d$-dilution with appropriately chosen $d$.
Unfortunately,
unlike in radio networks, it is not sufficient to prove that a station $v$
is the only (successful) transmitter in the ball of radius $r$ centered at $v$.
Technical challenge here is to deal with interferences going from stations
located in other (even distant) boxes, even thought in expectation there are not many of them.
We deal with this issue by bounding
the probabilities that many boxes may contain more than $2^i$ transmitting stations, for growing
values of~$i$, and calculating probabilities of large interferences using these bounds.
Although for a particular box
it might be the case that the fraction of successfully transmitting stations from this
box is small, we can still prove a global bound on the
number of successfully transmitting stations from the
whole considered set~$A$.

\T\T
\section{Backbone Structure \& Algorithm}
\label{s:backbone}

\T
In this section we describe a method of building a {\em backbone network},
a subnetwork of an input wireless network which acts as a tool
to perform several communication tasks. More precisely, given a network
with a communication graph $G$, we build its subnetwork $H$ which satisfies
the following properties:

\T\T
\begin{enumerate}
\item[$P_1:$]
the stations of $H$ form a connected dominating set of $G$;
\T
\item[$P_2:$]
the number of stations in $H$ is $O(m)$, where $m$ is the size
of the smallest connected dominating set of $G$;
\T
\item[$P_3:$]
each station from $G$ is associated with exactly one its neighbor that
belongs to $H$ (its {\em leader}).
\end{enumerate}

\T\T
Moreover, we design protocols with the following characteristics:

\T\T
\begin{enumerate}
\item[$A_1:$]
a protocol simulating one round of a message passing network on $H$ in a  {\em multi-round},
which consists of constant
number of rounds; more precisely, this protocol makes possible to exchange
messages between each pair of neighbors of $H$ in one multi-round; moreover,
each message received by a station $v\in H$ is successfully broadcasted to
all stations associated with $v$ (i.e., stations for which $v$ is the leader);
\T
\item[$A_2:$]
a protocol which, assuming that each station $v$ has its own message, makes
possible to transmit message of each station to its leader in $O(\Delta)$
rounds, where $1/\Delta$ is the density of the network.
\end{enumerate}

\T\T
Above, by designing a protocol we mean a distributed algorithm during
which each stations
learns its behavior in execution of the designed protocol.

Our algorithms performing the above tasks strongly rely on {\sselector}s. The algorithm
first elects leaders in boxes of the pivotal grid (\LocalLeader), then these leaders acquire
knowledge about stations in their boxes and their neighbors (in the communication graph
(\LocalLeader\ and \NeighborL) and finally a constant number of stations are added in each box
(of the pivotal grid) to these leaders in order to form the network $H$ satisfying stated above properties
(\InterC). Below, we describe these phases of the algorithm in more detail.
We assume that the inverse of density $\Delta$ is known to all stations of a network.
We make a simplifying assumption that if at most one station
from each box of the pivotal grid transmits in a round $t$, then
each such transmission is successful.
Due to Proposition~\ref{prop:dilsuc}, one can achieve this property
using dilution with constant parameter, which does not change the asymptotic complexity
of our algorithm.
We also use the parameter $\delta$ which corresponds to the dilution parameter from
Lemma~\ref{l:geometric-selector}.

\paragraph{\LocalLeader.}
The goal of this phase is to choose the leader in each nonempty box of the pivotal grid (i.e.,
each box containing at least one station of $G$). Each station $v$ has a local variable
$\state(v)$ denoting its state. At the beginning each $\state(v)=\textit{active}$ for each station.
Our algorithm consists of $\log N$ repetitions of $(N,\delta,\Delta,1/2)$-{\sselector} $S$.
After each round $t$ of the algorithm, each station $v$ which can hear a message in
$t$ sent by a station $u\in\boxx(v)$ changes its state to {\em passive}.
After $\log N$ executions of $(N,\delta,\Delta,1/2)$-{\sselector}, each station
$v$ such that $\state(v)=$active changes its state to {\em leader} and becomes
the leader of its box.

Let $M(i)$ be the set of stations which, after the $i$th execution of the {\sselector} $S$,
are in state active and they are located
in boxes which contain at least two stations in state active at this time.
%
\begin{proposition}\labell{prop:lleader:size}
For each $i\in[\log N]$, either $|M(i)|=0$ or $|M(i)|\leq |M(i-1)|/2$.
\end{proposition}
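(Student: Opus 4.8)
The plan is to track the set $M(i)$ of "still-active, still-in-contention" stations across one execution of the SINR-selector $S$ and argue that it either collapses to emptiness or halves. First I would fix $i \geq 1$ and set $A = M(i-1)$, the set of stations that after the $(i-1)$st execution of $S$ are still active and sit in a box containing at least two active stations. By definition every box contains at most $\Delta$ stations of $G$, hence at most $\Delta$ stations of $A$, so $A$ satisfies the hypothesis of the $(N,\delta,\Delta,1/2)$-SINR-selector. Moreover, by the very definition of $M(i-1)$, every element $v \in A$ fails the "unique in its box" condition: $v$ is not the only element of $\boxx(v)$ that belongs to $A$. Therefore $A = B$ in the language of property \textbf{(b)} of the selector definition.

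Next I would apply property \textbf{(b)} of the SINR-selector to the set $A$: during the $i$th execution of $S$, at least $\tfrac12 |B| = \tfrac12|A| = \tfrac12|M(i-1)|$ stations of $A$ broadcast successfully. The key observation is the effect of a successful broadcast on the state update rule: if $v \in A$ broadcasts successfully in some round $t$ of this execution, then every station $u \in \boxx(v)$ hears $v$'s message in round $t$ (because a successful transmission reaches all stations within distance $r$, and all stations sharing a box are within distance $\gamma\sqrt2 = r$), so every such $u$ sets $\state(u) \leftarrow$ passive. In particular, if $v' \neq v$ is another element of $A$ in the same box as $v$, then $v'$ becomes passive during this execution — note $v$ itself also becomes passive, unless we adopt the convention that the transmitter does not deactivate itself; either way, I would argue that at least one station per "successful box" is removed from the active set. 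I would then bound $|M(i)|$ by counting: $M(i) \subseteq A$ minus the stations that turned passive during the $i$th execution; since at least $\tfrac12|M(i-1)|$ stations transmitted successfully and each such event deactivates at least the transmitter itself (and it lies in a multi-element box, so this is a genuine removal from $M$), we get $|M(i)| \leq |M(i-1)| - \tfrac12|M(i-1)| = \tfrac12|M(i-1)|$, unless $M(i-1)$ was already empty.

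The main obstacle I anticipate is the bookkeeping around which stations exactly survive into $M(i)$: a station $v$ stays in $M(i)$ only if it is \emph{still} active \emph{and} still shares its box with another active station after the $i$th execution. One must be careful that a station could drop out of $M(i)$ not by becoming passive itself but by its box-mates all becoming passive — but that only helps the bound. The cleaner route, which I would take, is simply to show that the set of \emph{active} stations inside multi-element boxes shrinks by at least a factor $2$, using property \textbf{(b)} applied to $A = M(i-1)$ and the fact that each successful transmitter deactivates itself (so each of the $\geq \tfrac12|M(i-1)|$ successful transmitters leaves the active set). A secondary subtlety is making sure the "at most one station per box transmits successfully reaches everyone in the box" reasoning is valid: this is exactly the simplifying assumption stated before the phase description (justified by Proposition~\ref{prop:dilsuc}), so I would invoke it explicitly. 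Routine set-inclusion arithmetic then finishes the proof.
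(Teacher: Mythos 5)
Your overall strategy matches the paper's (apply the selector guarantee to $M(i-1)$ and argue that every guaranteed-successful transmitter drops out of $M$), but there are genuine missteps. First, the set that actually follows the schedule in the $i$th execution is the set of \emph{all} currently active stations, which besides $M(i-1)$ contains the active stations that are alone in their boxes; these also transmit and contribute interference, so invoking the selector with $A=M(i-1)$ (your ``$A=B$'' reading, which collapses property (b) into property (a)) does not describe the actual execution. The correct invocation --- and the very reason property (b) exists --- is $A=$ all active stations, $B=M(i-1)$: property (b) then guarantees at least $|M(i-1)|/2$ successful transmissions from $M(i-1)$ despite the lone-active interferers. Second, in the algorithm a transmitting station cannot hear and therefore never deactivates itself; indeed Lemma~\ref{l:lleader} relies on a successful transmitter remaining active, so that every nonempty box keeps a candidate leader. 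Hence your final count, which removes each successful transmitter from the \emph{active} set, uses the wrong mechanism; the correct one (which you mention only as a fallback) is that a successful transmission makes all box-mates passive, so the transmitter leaves $M$ by becoming the unique active station of its box, while stations that became passive leave $M$ trivially.

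A further point you leave untreated is that stations become passive \emph{during} the $i$th execution and then stop transmitting, so the actual execution is not an execution of $S$ by a fixed set, and your intermediate claim that at least $\tfrac12|M(i-1)|$ stations of $M(i-1)$ broadcast successfully in the $i$th execution is not literally guaranteed: a station whose designated successful slot comes late may already be passive and never transmit at all. The paper repairs this by comparing with a canonical execution and splitting the guaranteed set $X$ (with $|X|\ge |M(i-1)|/2$) into $X_1$, the stations that became passive before their successful slot (excluded from $M(i)$ because passive), and $X_2$, those still active at their slot (they do succeed, since dropouts only lower interference, and are excluded from $M(i)$ as the now-unique active station of their box), yielding $M(i)\subseteq M(i-1)\setminus X$ and hence the factor $1/2$. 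Your argument needs this, or an equivalent step, to be sound.
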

\begin{proof}
Let {\em canonical execution} of $S$ be an execution in which the set of stations
participating during the execution does not change. 
The properties of $S$
imply that the set $X$ of stations from $M(i-1)$ which transmit successfully during
the $i$th execution of $S$ has at least $|M(i-1=|/2$ stations.
Let $X=X_1\cup X_2$, where $X_1$ contains those stations which became passive before
having a successful transmission in $S$, and $X_2=X\setminus X_1$. Note that each
$v\in X_2$ does not belong to $M(i)$, since all stations in $\boxx(v)$ except of $v$
become passive after
a successful transmission of $v$. On the other hand, stations from $X_1$ does not belong
to $M(i)$, since they are in state passive.
Thus, $M(i)\subseteq M(i-1)\setminus X$ and therefore $|M(i)|\leq |M(i-1)|/2$.
\end{proof}
\begin{lemma}\labell{l:lleader}
After $\log N$ application of {\sselector} $S$ during {\LocalLeader},
there is exactly one station in state leader in each nonempty~box.
\end{lemma}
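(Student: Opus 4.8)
The plan is to combine the per-round shrinking bound of Proposition~\ref{prop:lleader:size} with a simple accounting of what it means for a box to contain exactly one active station. First I would observe that, by Proposition~\ref{prop:lleader:size}, the sequence $|M(0)|,|M(1)|,\ldots$ either hits $0$ or halves at every step; since $|M(0)|\le n\le N$, after at most $\log N$ executions of the selector we have $|M(\log N)|=0$. By the definition of $M(i)$, this means that after the last execution, every box that still contains an active station contains exactly one. So it remains to argue two things: (i) no nonempty box loses \emph{all} its active stations during the process, so that every nonempty box ends with exactly one active station, which then becomes its leader; and (ii) exactly one station per nonempty box reaches the leader state, i.e.\ the ``active $\to$ leader'' promotion at the end is consistent with ``exactly one active station remains.''

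For (i), the key point is that a station $v$ only becomes passive when it \emph{hears} a message from some $u\in\boxx(v)$ that transmitted successfully. Within a box, at the moment $u$ transmits $r$-successfully, every station of $\boxx(v)$ is within distance $r$ of $u$ (this is exactly the pivotal-grid property $\gamma=r/\sqrt 2$), so \emph{all} other active stations in that box hear $u$ and become passive simultaneously, while $u$ itself stays active. Hence a successful intra-box transmission in round $t$ collapses the active set of that box to the single station $u$, and thereafter no further intra-box transmission can make $u$ passive (it would need to hear a \emph{different} station of its box transmit successfully, but there is none left active, and passive stations following the selector do not cause $u$ to change state per the stated rule). Consequently, for each nonempty box, the number of active stations is non-increasing, never drops from $1$ to $0$, and is eventually $1$; I would phrase this as: once a box has a unique active station, it keeps exactly one forever.

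For (ii), I would argue by cases on whether a nonempty box $C$ ever has a successful intra-box transmission during the $\log N$ executions. If it does, then as above its active set becomes a singleton $\{u\}$ and stays that way, so after $\log N$ executions $u$ is the unique active station of $C$ and becomes its unique leader. If $C$ never has a successful intra-box transmission, then no station of $C$ ever becomes passive due to an intra-box message, so the active set of $C$ is unchanged throughout; but $|M(\log N)|=0$ forces $|C\cap(\text{active set})|\le 1$, and since $C$ is nonempty this active set has size exactly $1$ from the start — again yielding a unique leader. Either way each nonempty box ends with exactly one station in state leader, and (since only active stations are promoted and only in nonempty boxes) no empty box acquires a leader.

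I expect the main obstacle to be bookkeeping rather than a deep difficulty: one must be careful that the ``halving'' in Proposition~\ref{prop:lleader:size} refers to $M(i)$, the active stations lying in boxes with $\ge 2$ active stations, and translate ``$M(\log N)=0$'' correctly into ``every nonempty box has exactly one active station,'' while separately ruling out the degenerate possibility that a box is emptied of active stations entirely. The subtle point worth spelling out is the simultaneity of passivation inside a box (all other active stations hear a successful transmitter at once, because they are all within the communication range $r$ by the pivotal-grid geometry), which is what guarantees the active count goes $\ge 2 \to 1$ in one shot and never $1\to 0$; once that is in place the lemma follows by applying Proposition~\ref{prop:lleader:size} iteratively over the $\log N$ rounds.
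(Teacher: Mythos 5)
Your proposal is correct and follows essentially the same route as the paper: invoke Proposition~\ref{prop:lleader:size} to get $|M(\log N)|=0$ (so no box retains two or more active stations), and then observe that the active count of a box can only drop in a round in which some station of that box transmits, and that transmitter itself stays active, so a nonempty box is never emptied of active stations. One caution: you repeatedly identify ``$v$ hears $u\in\boxx(v)$'' with ``$u$ transmitted successfully,'' but under SINR a transmission can be heard by some box-mates without being $r$-successful, so your case split on ``$C$ never has a successful intra-box transmission $\Rightarrow$ its active set is unchanged'' is not literally valid; this detour is unnecessary, since the transmitter-stays-active observation (which you also state, and which is the paper's argument) already rules out the $1\to 0$ and $\geq 2\to 0$ transitions and yields the lemma.
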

\begin{proof}
Observe that $|M(0)|\leq N$ and therefore, by Proposition~\ref{prop:lleader:size}, $|M(\log N)|=0$.
Thus, there is no box with more than one active station at the end of the algorithm.

It remains to verify that each nonempty box has a station in state active at the end
of the algorithm. However, note that the number of active stations in a box in a particular
round $t$ may decrease
only in the case that at least one station in that box is transmitting in $t$. Thus, this
station remains active. Therefore, it is not possible that a nonempty box has no station
in state active at the end of the algorithm.
\end{proof}

\paragraph{\LocalLearning.}
The goal of this phase is to assure that each leader of a box gets knowledge about all stations in its
box and shares this knowledge with these stations. To this aim, we execute the {\sselector}
$S$ on the set of non-leaders $\log N$ times. And, the leaders send confirmation messages
after each round of the selector if they can hear a message from their boxes. And in turn,
stations which receive confirmation get {\em passive}. Moreover, leaders (and other stations)
store counters of
confirmed stations from their boxes and arrays containing IDs of confirmed stations.
Below, we describe this in more detail.
\begin{algorithm}[H]
	\caption{\LocalLearning}
	\label{alg:LocalLearning}
	\begin{algorithmic}[1]
    \State for each station $v$: if $\state(v)\neq$leader, then $\state(v)\gets$active;
	\State  Repeat $\log N$ times the {\sselector} $S$. In each round, only stations
    in the state {active} can send messages. Each round $t$ of the selector is followed
    by round $t'$, in which each station $v$ runs the following instructions:
        \If{$\state(v)=$leader, $\boxx(v)=C$ and $v$ received a message from $u\in C$ in $t$}
            \State $count\gets count+1$
            \State $set[count]\gets u$
            \State {send a message $(u,count)$, containing the ID $u$, the coordinates of $u$
            and the value of $count$.}
        \EndIf
    After the round $t'$, each station $w$ which receives a message from the leader of its
    box, updates its local copies of $count$ and $set$ appropriately.
    \end{algorithmic}
\end{algorithm}
%
%
The properties of $S$ and an analysis similar to the analysis of the phase {\LocalLeader} directly imply
the following result.
\begin{lemma}\labell{l:LocalLearning}
After {\LocalLearning} each station knows all stations located in its box, stored
in its array $set[1,count]$.
\end{lemma}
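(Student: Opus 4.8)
The plan is to mirror the analysis of \LocalLeader\ (Lemma~\ref{l:lleader}), replacing ``active stations'' with ``unconfirmed non-leader stations'' and using part~(b) of the SINR-selector definition rather than part~(a). First I would set up notation: for $i\in[0,\log N]$, let $U(i)$ be the set of non-leader stations that are still in state \textit{active} (i.e.\ unconfirmed) after the $i$th repetition of $S$, and let $M'(i)\subseteq U(i)$ be those stations whose box contains at least two elements of $U(i)$, i.e.\ at least one other unconfirmed non-leader. Note $|U(0)|\le N$ and each nonempty box starts with at least one non-leader only if it had $\ge 2$ stations originally; a box with a unique station is just its own leader and needs nothing (after \LocalLeader\ that station is the leader, so it is not in $U(0)$ at all). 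Actually the clean statement is: every non-leader station must become confirmed.

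The key inductive step is the analog of Proposition~\ref{prop:lleader:size}: either $|M'(i)|=0$ or $|M'(i)|\le|M'(i-1)|/2$. To see this, apply property~(b) of the $(N,\delta,\Delta,1/2)$-SINR-selector to the set $A=U(i-1)$ of active non-leaders participating in the $i$th execution; then $B$, the set of $v\in A$ that are not the unique element of $\boxx(v)$ in $A$, is exactly $M'(i-1)$, and at least $|M'(i-1)|/2$ elements of $B$ transmit successfully during that execution. Split this successful set $X$ as $X_1\cup X_2$ where $X_1$ are those who became confirmed (received a leader's confirmation message) before their own successful transmission, and $X_2=X\setminus X_1$. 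Each $v\in X_1$ is confirmed, hence not in $U(i)$, hence not in $M'(i)$. For $v\in X_2$: when $v$ transmits successfully, its box leader hears $v$ in round $t$ and in the following round $t'$ broadcasts $(v,count)$; since the leader is a neighbor of every station in its box (boxes have diameter $\le r$), this confirmation reaches all of $\boxx(v)$, in particular $v$, which therefore becomes confirmed and leaves $U(i)$. So $X\cap M'(i)=\emptyset$ and $M'(i)\subseteq M'(i-1)\setminus X$, giving $|M'(i)|\le|M'(i-1)|/2$.

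From the step, $|M'(\log N)|=0$, so after $\log N$ repetitions no box contains two or more unconfirmed non-leaders. It remains to rule out a box with exactly one unconfirmed non-leader $v$ (a ``singleton'' case not covered by property~(b)). Here I would argue directly: a non-leader $v$ can fail to be confirmed only if its box's leader never hears it; but as long as $v$ is the sole active station in its box, the selector $S$ (by property~(a), applied to the full active set, or simply because a lone transmitter in a box transmits successfully under the simplifying assumption adopted in this section) gives $v$ a successful transmission in one of the $\log N$ repetitions, at which point its leader confirms it --- contradiction. Combining, after \LocalLearning\ every non-leader station of every nonempty box has been confirmed, its ID and coordinates have been stored in $set$ by the box leader, and the confirmation message carrying $(u,count)$ was itself successfully broadcast to all stations of the box (again because the leader reaches its whole box), so every station's local copy of $count$ and $set$ is updated to list all stations located in its box. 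That is the claim.

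The main obstacle is the asymmetry between the selector's two guarantees: property~(b) only drives down boxes with $\ge 2$ surviving candidates, so the ``last'' station in a box is not handled by the halving argument and must be dispatched by a separate singleton argument relying on property~(a) (or on the section's simplifying assumption that a lone transmitter per box succeeds). A secondary point to be careful about is that confirmations are sent in the round $t'$ following each selector round $t$ and must themselves be received without interference; this is fine because distinct box leaders each transmit to their own box only, and one can dilute the $t'$ rounds by a constant (as already done for $S$) so that no two leaders interfere, exactly as in Proposition~\ref{prop:dilsuc} --- but it is worth stating explicitly that the confirmation sub-rounds are collision-free.
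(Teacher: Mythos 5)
Your overall template here is the right one (canonical-execution halving, the fact that a box leader is within distance $r$ of every station of its box, confirmation rounds made collision-free by the section's simplifying assumption/constant dilution), but routing the halving through property~(b) leaves a residual case that your argument does not actually close. After the $M'$-halving you are left with boxes containing exactly one unconfirmed non-leader, and you dispatch these by claiming that property~(a) ``gives $v$ a successful transmission in one of the $\log N$ repetitions.'' Property~(a) is only a counting guarantee: in each execution at least half of the participating set succeeds, but nothing forces any \emph{particular} station to be among the successful ones, so a fixed singleton can be missed in every repetition unless you run a second halving argument over the set of singletons --- and singletons keep being created as box-mates get confirmed, and you have no executions left to budget for such a second phase after spending all $\log N$ of them on $M'$. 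Your fallback --- that a station which is the sole active one in its box succeeds by the simplifying assumption --- also fails, because that assumption requires at most one transmitter per box \emph{in every box simultaneously} in the given round, while other boxes may have several active transmitters in the same selector round; being the only active station of one's own box does not make the round collision-free.

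The gap disappears if you apply property~(a) exactly where you applied property~(b): take $A=U(i-1)$, the whole set of unconfirmed non-leaders, and conclude that at least $|U(i-1)|/2$ of them transmit successfully during the $i$th execution (keeping the same $X_1\cup X_2$ split you already use for stations confirmed mid-execution, since dropping transmitters only lowers interference). Every successful station is heard by its box leader, hence confirmed in the following round, so $|U(i)|\le |U(i-1)|/2$ and $|U(\log N)|=0$, with no singleton case at all. This is the adaptation of the \LocalLeader\ analysis the paper intends (there property~(b) is needed precisely because boxes with a single active station are already finished and must be discounted; here \emph{every} unconfirmed station, singleton or not, must still be heard, which is exactly what property~(a) counts). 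The remaining ingredients of your write-up --- leader in range, collision-free confirmations, and the propagation of $set$ and $count$ to the whole box --- are fine.
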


\paragraph{\NeighborL.}
Note that leaders of boxes form a ``backbone'' which is a
dominating set of the communication graph.
In fact such a dominating set is at most $|\DIR|+1=21$ times larger
than a smallest dominating set (since each station $v$ can be only in
range of stations located in box $C=\boxx(v)$ and boxes in
directions $(d_1,d_2)\in\DIR$ from $C$).
Our goal is to add
a constant number of stations from each nonempty box in order to guarantee connectivity of
our backbone and make efficient communication inside the backbone possible. To this aim
we first design a communication schedule which gives each station some knowledge about
neighbors of each station in its box. This is achieved by allowing that, given a box $C$,
each station from $C$ sends a message in different round. Thus, each such message is received
by all neighbors of the transmitting station (due to dilution). Moreover, each station
adds locations and IDs of all stations from which it receives messages to
(stored locally) the set of its
neighbors.
%
Given a box $C(i,j)$ and $(d_1,d_2)\in\DIR$, the key information for us is whether and which
stations from $C$ have neighbors in the box $C(i+d_1,j+d_2)$. Therefore, we repeat
the second time this procedure in which each station in each box transmits separately.
This time each station $v$ attaches to its messages $D(v)$, which is
the set of directions from
$\DIR$ describing boxes in which $v$ has neighbors, and for each such direction $(d_1,d_2)$, it attaches
$twin_{(d_1,d_2}(v)$, the smallest ID of such a neighbor\footnote{We do not allow a station to send information
about all its neighbors in order to keep messages small, of size $O(\log N)$.}.
Thanks to that, this information is spread
in the whole $\boxx(v)$.
Given this information, stations ``responsible'' for communication with other boxes
are chosen in each box.
\begin{algorithm}[H]
	\caption{\NeighborL}
	\label{alg:NeighborL}
	\begin{algorithmic}[1]
    \State Each station $v$ sets: $\Gamma(v)=\emptyset$
    \For{each station $v$}
        \For{$i=1,2,\ldots,\Delta$}
            \If{$v=set[i]$}
            \State $v$ sends a message,
            \Else\ if $v$ can hear $u$:$\Gamma(v)\gets\Gamma(v)\cup\{u\}$
            \EndIf
        \EndFor
    \EndFor
    \For{each station $v$}
        \For{$i=1,2,\ldots,\Delta$}
            \State\ \ if $v=set[i]$ then $v$ sends $D(v)$.
            \State\ \ if $v$ can hear $D(u)$ then $v$ stores it.
        \EndFor
    \EndFor
    \For{$(d_1,d_2)\in\DIR$ and $v\in C(i,j)$}
        \State $C'\gets C(i+d_1,j+d_2)$
        \State $nb_{(d_1,d_2)}\gets \{u\in C\,|\, u \mbox{ has neigh.in } C'\}$
        \State if $nb_{(d_1,d_2)}\neq\emptyset$: $s^C_{(d_1,d_2)}\gets\min(nb_{(d_1,d_2)})$
        \State \ \ $r^{C'}_{(-d_1,-d_2)}\gets twin_{(d_1,d_2)}(v).$
        \State \ \ $s^C_{(d_1,d_2)}$ sends a message to $r^{C'}_{(-d_1,-d_2)}$.
    \EndFor
    \end{algorithmic}
\end{algorithm}

\T\T
Note that each station can perform computation from line 10-14
independently using information received in the first
two loops. All stations of type $s^C_{(d_1,d_2)}$ and
$r^C_{(d_1,d_2)}$ are added to the backbone $H$ (except of leaders
of all boxes). The idea
is that the task of $s^C_{(d_1,d_2)}$ (senders) is to send
a message from the box $C$ to the box $C'$ located in direction
$(d_1,d_2)$ from $C$, while the task of $r^{C'}_{(-d_1,-d_2)}$
(receivers)
is to broadcast this message to all stations in $C'$.

\paragraph{Simulation of message passing model inside the backbone.}
Now, we define {\em multi-round} which consists of $|\DIR|\cdot(\delta')^2$
actual rounds of our network, where $\delta'$ corresponds to the parameter
$\delta$ from Proposition~\ref{prop:dilsuc}.

\T
\begin{algorithm}[H]
	\caption{Multi-round}
	\label{alg:multi}
	\begin{algorithmic}[1]
    \For{each station $v$}
        \If{$\state(v)=\text{leader}$} $v$ sends a message;
        \EndIf
    \EndFor
    \For{$(d_1,d_2)\in \DIR$}
        \For{each station $v$}
            \State Round 1: if $(v=s_{(d_1,d_2)}^{\boxx(v)})$:
            \State \ \ send a mess.;
            \State Round 2: if $(v=r_{(-d_1,-d_2)}^{\boxx(v)})$:
            \State \ \ send a mess. received in Round 1;
        \EndFor
    \EndFor
    \end{algorithmic}
\end{algorithm}

\T\T
Note that, since at most one station from each box is sending a message,
each stations is transmitting successfully. Moreover, since each station
from $H$ is sending during a multi-round, each station can (successfully)
send a message to all its neighbors in $H$ during a multi-round.
Below, we formulate a corollary emphasizing the fact that our backbone
provides possibility of communication in a graph whose vertices correspond
to boxes of the pivotal grid and edges connecting boxes which are neighbors
in a network (see definition of neighborhood of boxes and Figure~\ref{fig:adjacent}).

\T
\begin{corollary}\labell{cor:com:box}
Assume that all stations in box $C$ know the same message, for each box of
the pivotal grid.
During a multi-round, a message from
$C$ can be transmitted to all stations in $C'$ for each pair
of stations $(C,C')$ that are neighbors in the underlying network.
\end{corollary}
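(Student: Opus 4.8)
The plan is to trace, for one fixed pair of neighboring boxes $(C,C')$, the three-hop path $C \to s^C_{(d_1,d_2)} \to r^{C'}_{(-d_1,-d_2)} \to C'$ along which the common message of $C$ travels during the multi-round of Algorithm~\ref{alg:multi}, and to verify that every transmission on this path is successful. First I would fix the direction: since $C=C(i,j)$ and $C'$ are neighbors, there is an edge $(v,v')$ of the communication graph with $v\in C$ and $v'\in C'$, so the displacement $(d_1,d_2)$ with $C'=C(i+d_1,j+d_2)$ lies in $\DIR$. Consequently, in phase \NeighborL\ (Algorithm~\ref{alg:NeighborL}) the set $nb_{(d_1,d_2)}=\{u\in C : u \text{ has a neighbor in } C'\}$ is nonempty, so the sender $s^C_{(d_1,d_2)}=\min(nb_{(d_1,d_2)})$ is well defined and, by its own definition, has a neighbor in $C'$; the receiver $r^{C'}_{(-d_1,-d_2)}$, chosen as the smallest-ID neighbor of $s^C_{(d_1,d_2)}$ lying in $C'$, is therefore a well-defined station of $C'$ that is a neighbor of $s^C_{(d_1,d_2)}$. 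Both of these stations were added to the backbone $H$.

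Next I would follow the two rounds of the $(d_1,d_2)$-iteration of Algorithm~\ref{alg:multi}. In the first round the transmitters are exactly the stations $v$ with $v=s^{\boxx(v)}_{(d_1,d_2)}$, i.e.\ at most one per box; by the one-transmitter-per-box property (realized by constant dilution, Proposition~\ref{prop:dilsuc}) each of them transmits successfully. In particular $s^C_{(d_1,d_2)}$, which holds the common message of $C$ because it is a station of $C$, is heard by all of its neighbors and thus by $r^{C'}_{(-d_1,-d_2)}$. In the second round the transmitters are the stations $w$ with $w=r^{\boxx(w)}_{(-d_1,-d_2)}$, again at most one per box, so all of them — in particular $r^{C'}_{(-d_1,-d_2)}$ — successfully transmit the message just received; hence every neighbor of $r^{C'}_{(-d_1,-d_2)}$ learns the message of $C$. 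Finally, since $r^{C'}_{(-d_1,-d_2)}\in C'$ and the side of the pivotal grid is $\gamma=r/\sqrt{2}$, every station of $C'$ is at distance at most $r$ from $r^{C'}_{(-d_1,-d_2)}$ and so is among its neighbors; therefore all of $C'$ receives the message of $C$. Running this argument for every $(d_1,d_2)\in\DIR$ exhausts all neighboring box pairs, which gives the claim.

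The only delicate point is that in the first round of the $(d_1,d_2)$-iteration the senders of many, possibly far-apart, boxes transmit at the same time; what saves us is precisely that we invoke the ``successful transmission'' guarantee rather than naive collision-freeness — a successful transmission of $s^C_{(d_1,d_2)}$ means its message reaches every station within range no matter what the other concurrent (diluted) transmissions are, and the dilution of Proposition~\ref{prop:dilsuc} is exactly what certifies this despite the accumulated interference from distant boxes. The leaders' round at the start of the multi-round is a separate round and plays no role in this argument, so no further calculation is needed.
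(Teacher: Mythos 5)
Your argument is correct and follows essentially the same route as the paper, which justifies the corollary by the remarks preceding it: in each round of the multi-round at most one station per box transmits (hence, by the constant-dilution assumption via Proposition~\ref{prop:dilsuc}, successfully), and the sender $s^C_{(d_1,d_2)}$ / receiver $r^{C'}_{(-d_1,-d_2)}$ pair chosen in \NeighborL\ relays the common message of $C$ to all of $C'$, since every station of $C'$ lies within range $r$ of the receiver. Your write-up merely makes explicit the well-definedness of the sender/receiver and the uniqueness of one transmitter per box per direction, which the paper leaves implicit.
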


\T\T
\begin{theorem}\labell{th:backbone}
 The backbone $H$ satisfying properties $P_1-P_3$ and protocols
 satisfying $A_1,A_2$ can be build in time $O(\Delta\polylog(N))$.
\end{theorem}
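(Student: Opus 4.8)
The plan is to assemble Theorem~\ref{th:backbone} from the four phases described above, bounding the round complexity of each and invoking the already-proven correctness lemmas. First I would account for the time. \LocalLeader\ runs $\log N$ copies of the $(N,\delta,\Delta,1/2)$-\sselector\ $S$, each followed by a single confirmation round; by Lemma~\ref{l:geometric-selector} the selector has length $O(\Delta\log^2 N)$ (or $O(\Delta\log^3 N)$ when $\alpha=2$), so this phase costs $O(\Delta\log^3 N)$ rounds, and Lemma~\ref{l:lleader} guarantees it leaves exactly one leader per nonempty box. \LocalLearning\ is structurally identical --- $\log N$ repetitions of $S$, each round paired with a leader-confirmation round --- so it costs $O(\Delta\polylog N)$ as well, and Lemma~\ref{l:LocalLearning} gives that every station knows the full contents of its box. \NeighborL\ consists of two passes, each of which gives every station of every box a private round in which to transmit; since a box has at most $\Delta$ stations and we dilute by a constant (Proposition~\ref{prop:dilsuc}), each pass takes $O(\Delta)$ rounds, and the final loop over $\DIR$ is purely local computation (as remarked after Algorithm~\ref{alg:NeighborL}), plus one $O(\Delta)$ round of sender-to-receiver messages. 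Summing, the whole construction is $O(\Delta\polylog N)$.

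Next I would verify the structural properties. For $P_1$: the leaders already form a dominating set, because every station lies in some nonempty box and hence is adjacent to that box's leader; connectivity is the point of \NeighborL --- for every pair of neighboring boxes $C,C'$ in direction $(d_1,d_2)\in\DIR$ we added $s^C_{(d_1,d_2)}$ and $r^{C'}_{(-d_1,-d_2)}$ to $H$, and since all stations of a box are mutually adjacent and these added stations have a cross-box edge by construction, any two boxes that are neighbors in $G$ have their leaders joined by a short path inside $H$; as the box-adjacency graph inherits connectivity from $G$, $H$ is connected. For $P_2$: $|H|$ is at most a constant ($1+2|\DIR|=43$) per nonempty box, and since each box has diameter $\le \gamma\sqrt2 \le r$ every nonempty box must be ``hit'' (dominated) by any CDS, so the number of nonempty boxes is $O(m)$ where $m$ is the smallest CDS size, giving $|H|=O(m)$. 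For $P_3$: associate each station with the leader of its own box, which by Lemma~\ref{l:lleader} is unique and, being in the same box, is a neighbor.

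Finally I would establish $A_1$ and $A_2$. For $A_1$, the Multi-round of Algorithm~\ref{alg:multi} is the witness: in its first step each leader transmits, and since at most one station per box transmits, Proposition~\ref{prop:dilsuc}/the simplifying assumption makes every such transmission successful, so each associated station hears its leader; then for each of the $|\DIR|$ directions the $s$-then-$r$ two-round pattern again has at most one transmitter per box, so every backbone edge carries its message and (by the $r$-station's broadcast) the message reaches all stations of the destination box --- this is exactly Corollary~\ref{cor:com:box}, and a multi-round is $O(1)$ rounds. For $A_2$, the first loop of \NeighborL\ already exhibits the needed schedule: iterating $i=1,\dots,\Delta$ and letting $set[i]$ transmit lets every station transmit in a dedicated round heard by all its box-mates, in particular by its leader, in $O(\Delta)$ rounds.

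I expect the main obstacle to be the connectivity half of $P_1$: one must argue carefully that the chosen $s^C_{(d_1,d_2)}/r^{C'}_{(-d_1,-d_2)}$ stations, together with the leaders and the within-box clique edges, actually wire up \emph{every} adjacency of the box-graph --- in particular that $nb_{(d_1,d_2)}\neq\emptyset$ exactly when boxes $C,C'$ are neighbors, and that the resulting subgraph of $H$ is connected whenever $G$ is. The time and domination bounds are routine; the connectivity argument is where the definitions of $\DIR$, box-neighborhood, and the twin-pointers must all be used together.
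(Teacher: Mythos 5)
Your proposal is correct and follows essentially the same route as the paper, which states Theorem~\ref{th:backbone} as a summary of the Section~\ref{s:backbone} construction: time from $\log N$ applications of the $O(\Delta\log^2 N)$-length \sselector\ plus the $O(\Delta)$-round per-box round-robin of \NeighborL, properties $P_1$--$P_3$ from Lemmas~\ref{l:lleader} and \ref{l:LocalLearning} together with the box/\DIR\ counting, and $A_1$, $A_2$ from the multi-round (Corollary~\ref{cor:com:box}) and the \NeighborL\ schedule. The points you flag (that $nb_{(d_1,d_2)}\neq\emptyset$ precisely for neighboring boxes, and that senders/receivers plus within-box cliques wire up every box adjacency) are exactly the paper's intended connectivity argument, so there is no gap.
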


\T\T\T\T
\section{Applications of Backbone} 
\label{s:alg}

\T

In this section two examples of applications of the backbone structure
are presented.
First, we present a leader election algorithm {\GlobalLeader}, where each station
initially knows only its own ID, $N$, and $\Delta$. This algorithm uses the backbone
structure in order to choose local leaders in boxes of the pivotal grid. Then,
using multi-rounds defined for the backbone structure, we can simulate model of ad hoc broadcast
networks without interference. Therefore, our leader election algorithm works in this model,
and the main emphasis is on completing leader election task in short time and with small
size messages, assuming that the diameter of the network is initially unknown to stations.
Finally, using the backbone structure and additional information acquired during leader election
algorithm, we show an efficient algorithm for multi-broadcast.

Below, we describe our leader election algorithm.
First, we execute an algorithm which builds the backbone
$H$ of the communication graph $G$.

After the part {\LocalLearning} of backbone construction,
each station $v$ knows IDs of all stations
in its box $\boxx(v)$ (see Lemma~\ref{l:LocalLearning}). So, assume that the station with the smallest
ID in a box is the leader of that box (known to all stations in the
box, as we observed). And, our goal is to elect the leader of the
network among leaders of boxes.
Then, using Corollary~\ref{cor:com:box}, we can express our
algorithm in terms of communication in the graph, whose
vertices are boxes of the pivotal greed and edges connect
boxes which are neighbors in the underlying network.
Moreover, using properties of multi-rounds, we can guarantee that each two
boxes $C, C'$ which are neighbors exchange messages during one
multi-round. With each nonempty box $C$ we associate the following
variables (which may be stored in local memories of all stations from
the backbone $H$ located in $C$):

\vspace*{-1.5ex}
\begin{itemize}
 \item $\leader(C)$, initially equal to the ID of the local
 leader of $C$;
\vspace*{-1.5ex}
 \item $\state(C)$, initially equal to $\forward$, and in general can be
set to one of the values:
 $\forward,$ $\waitback,$ $\backward,$ $\waitconf,$ $\confirm$, $\stopp$;
\vspace*{-1.5ex}
 \item
 sets $\pred(C)$ and $\suc(C)$, initially empty.
\end{itemize}

\vspace*{-1.5ex}
\noindent
Our leader election algorithm {\GlobalLeader} consists of consecutive repetitions of
multi-rounds.
However, only boxes in some specified states send messages
during a multi-round. Moreover, at the end of each multi-round, each
box updates its local variables. The important issue to deal with is
to guarantee that all boxes know, as early as possible, that leader
is already elected and can finish the computation.

The idea is that each box initially holds a token containing the value of
its leader. The box in state $\forward$
sends the token with ID of its leader to its neighbors and then
moves to the state $\waitback$ in which it is waiting for confirmation that
its leader is elected as the leader in some subnetwork (specified later).
However, if a station receives
a token with smaller ID than its current value $\leader(C)$, then it changes
its value of the leader, gets back to the state forward and repeats the whole process.
At the same time, a directed acyclic graph $G'$ is formed defined by variables $\pred$ and
$\suc$, where edges are $(C_1,C_2)$ for $C_2\in\suc(C_1)$ and an edge $(C_1,C_2)$ denotes
the fact that $C_2$ received the current value $\leader(C_2)$ from~$C_1$.

After receiving confirmation that $\leader(C)$ is the leader of
boxes following it in $G'$, $C$ changes its state to $\backward$ in which it sends a message
confirming the fact that $\leader(C)$ is the leader of the subnetwork ``following'' it. And,
$C$ changes its state to $\waitconf$. However, if a box is the source in $G'$, i.e. it has
no predecessor ($\pred(C)=\emptyset$), reception of $\backward$ message means that
$\leader(C)$ is the leader of the whole network already. And, $C$ gets the state $\confirm$
in which it initiates flooding of the $\confirm$ message informing other stations that
the process of leader election is finished. Other stations enter the $\confirm$ state
after receiving $\confirm$ message from the predecessors.

Now, we describe these ideas in more detail. Our algorithm consists of phases, where
each phase consists of two multi-rounds, each multi-round followed by updates of local
variables of boxes.
\comment{
\begin{algorithm}[H]
	\caption{Multiround 1 in a phase of leader election}
	\label{alg:multi:leader}
	\begin{algorithmic}[1]
	\If{$\state(C)\in\{\forward,\backward,\confirm\}$}
	  $C$ sends $\state(C), \leader(C)$ to its neighbors
	\EndIf
        \end{algorithmic}
\end{algorithm}
}
On the basis of received messages, box $C$ changes its local variables and transmit
information about changes in the second multi-round.
We describe this for
a box~$C$:
\begin{algorithm}[H]
\caption{Multiround 1 of a phase}
\label{alg:states:leader:m1}
\begin{algorithmic}[1]
  \If{$\state(C)\in\{\forward,\backward,\confirm\}$}
    \State $C$ sends $\state(C), \leader(C)$ to its neighbors
  \EndIf
\end{algorithmic}
\end{algorithm}

\T\T
\begin{algorithm}[h]
\caption{Updates after multiround 1}
\label{alg:states:leader:u}
\begin{algorithmic}[1]
  \State $l\gets \min_{C'\in\Gamma(C)}(\leader(C'))$
  \If{$l<\leader(C)$}
    \State $\state(C)\gets \forward$
    \State $\leader(C)\gets l$
    \State $\pred(C)\gets \{C'\in\Gamma(C)\,|\, \leader(C')=l\}$
  \EndIf
  \State \textbf{else:}
  \newpage
  \State \textbf{CASE} ($\state(C)$):
    \State \textbf{$\forward$}: $\state(C)\gets\waitback$
    \State \textbf{$\waitback$}:
      \State $\m{D}\gets \{C'\in\suc(C)\,|\, C'\mbox{ sent }(\backward, \leader(C))\}$
      \If{$\m{D}=\suc(C)$}
        \State\textbf{if} ($\suc(C)=\Gamma(C)$): $\state(C)\gets\confirm$
        \State\textbf{else} $\state(C)\gets\backward$
      \EndIf
    \State \textbf{$\backward$}: $\state(C)\gets \waitconf$
    \State \textbf{$\waitconf$}:
    \If{$C$ received $(\confirm,\leader(C))$ from each $C'\in\pred(C)$} $\state(C)\gets\confirm$
    \EndIf
    \State \textbf{$\confirm$}: $\state(C)\gets \stopp$
    \State \textbf{END CASE}
\end{algorithmic}
\end{algorithm}
\T\T
\begin{algorithm}[H]
\caption{Multiround 2 and updates after it}
\label{alg:states:leader:m2}
\begin{algorithmic}[1]
  \State $C$ sends $\pred(C)$ to each its neighbor;
  \State $\suc(C)\gets\emptyset$
  \For{each $C'$ such that $C\in\pred(C')$}
    \State $\suc(C)\gets \suc(C)\cup\{C'\}$
  \EndFor
\end{algorithmic}
\end{algorithm}
\comment{ 
In the above algorithm, we assume that boxes are identified by IDs of their
original local leaders (chosen during design of the backbone). That is why
we can assume some order on their IDs (which we identify with names of boxes).
}

\T
Formal analysis of our algorithm is postponed to Appendix~\ref{s:app:leader},
where the following theorem is proved.

\T
\begin{theorem}\labell{t:leader:election}
 The algorithm {\GlobalLeader} finishes leader election in $O(D+\Delta\polylog n)$ rounds, where $D$
 is the diameter of the underlying network.
\end{theorem}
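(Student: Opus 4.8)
The plan is to analyze the token-propagation process and argue separately about (i) correctness — exactly one leader is elected and all stations learn its ID — and (ii) the round complexity bound $O(D+\Delta\,\polylog n)$. First I would invoke Theorem~\ref{th:backbone}: the backbone $H$ together with the multi-round protocol can be built in $O(\Delta\,\polylog N)$ rounds, and by Corollary~\ref{cor:com:box} one multi-round lets neighboring boxes of the pivotal grid exchange messages. So from that point on the whole execution of {\GlobalLeader} takes place in the virtual graph $G'$ on boxes, where one phase is $O(1)$ multi-rounds, i.e.\ $O(1)$ actual rounds. It therefore suffices to bound the number of phases by $O(D)$ and to prove correctness in the box-graph model without interference.

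For correctness I would first observe that the value $\leader(C)$ is monotonically non-increasing over time for every box $C$, and it can only decrease when a strictly smaller ID is received from a neighbor; hence after at most $D$ phases every box $C$ has $\leader(C)=\ell^\star$, the globally smallest local-leader ID, and the $\pred/\suc$ pointers then define a BFS-like DAG $G'$ rooted at the box $C^\star$ that originally holds $\ell^\star$ (every edge of $G'$ points from a box that forwarded the current value to a box that adopted it, so no cycles and a unique source once all values equal $\ell^\star$). Then I would track the $\backward$/$\confirm$ wave: once a box's value has stabilized to $\ell^\star$ and all its successors have sent $(\backward,\ell^\star)$, it enters $\backward$; a straightforward induction on the height of $G'$ shows the $\backward$ messages reach $C^\star$, which (having empty $\pred$) switches to $\confirm$ and floods the confirmation down $G'$, so every box — and hence, via the intra-box broadcast guaranteed by property $A_1$ of the backbone, every station — eventually sets $\state=\stopp$ knowing $\ell^\star$. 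Exactly one station keeps a distinguished status because only the station with ID $\ell^\star$ in box $C^\star$ is the one whose local-leader ID equals the global leader; all others remain non-leaders.

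For the time bound I would argue that the whole process decomposes into three waves, each costing $O(D)$ phases plus the additive backbone cost: (1) the $\forward$ wave stabilizes all $\leader(C)$ to $\ell^\star$ within $D$ phases, since a strictly decreasing value propagates one hop per phase and graph distances are at most $D$; (2) after stabilization the $\backward$ wave climbs $G'$ from leaves to root, and since $G'$ is a sub-DAG of the reachability graph on boxes its height is $O(D)$, so this costs $O(D)$ further phases; (3) the $\confirm$ flood descends $G'$ in another $O(D)$ phases. The subtle point — and the one I expect to be the main obstacle — is that phases (1) and (2) are \emph{interleaved}: a box may enter $\backward$ prematurely while its value is still about to be lowered by a late-arriving smaller ID, which resets it to $\forward$ and invalidates part of the $\backward$ wave. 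The fix is an amortized/potential argument: charge each reset to the hop-distance improvement of the ID it just adopted; since each box's value can drop at most $D$ times in total and each drop restarts at most an $O(D)$-length backward segment, one shows the total number of phases is still $O(D)$, not $O(D^2)$, by observing that a box that has already seen value $\ell^\star$ never resets again, so the ``final'' backward/confirm waves run on a frozen DAG. Combining the $O(D)$ phases $\times\,O(1)$ rounds/phase with the $O(\Delta\,\polylog N)$ backbone-construction cost, and noting $N=\mathrm{poly}(n)$ so $\polylog N=\polylog n$, gives the claimed $O(D+\Delta\,\polylog n)$ bound.
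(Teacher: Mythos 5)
Your overall route is the same as the paper's: reduce everything to the interference-free graph of boxes via Theorem~\ref{th:backbone} and Corollary~\ref{cor:com:box} (so a phase costs $O(1)$ rounds), then bound the number of phases by $O(D)$ through a forward/backward/confirm three-wave analysis, and add the $O(\Delta\polylog n)$ backbone-construction cost. Two points, however, need attention. First, the amortized argument you propose for the interleaving issue does not give $O(D)$ as stated: up to $D$ value-drops per box, each ``restarting an $O(D)$-length backward segment,'' sums to $O(D^2)$. The paper (Lemma~\ref{l:leader:time}, via Lemma~\ref{l:leader:progres}) needs no amortization at all: boxes at hop-distance $d$ from the box $C^\star$ holding the minimum adopt $\ell^\star$ by phase $d$, so all values are frozen by phase $D$; afterwards the backward wave climbs the (now fixed) BFS DAG in at most $D$ further phases and the confirm flood descends in at most $D$ more, giving $3D+1$ phases. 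This is exactly your closing ``frozen DAG'' remark --- that observation alone is the proof, and the charging scheme should be dropped.

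Second, and more importantly, there is a genuine gap on the correctness side. You analyze the $\backward$/$\confirm$ waves only after stabilization, but the algorithm executes these transitions throughout, and nothing in your sketch excludes the following: a box $C'$ whose local leader is \emph{not} the global minimum, whose $\pred(C')$ is still empty and all of whose neighbors have (prematurely) gone $\backward$ with value $\leader_0(C')$, would enter $\confirm$ and then $\stopp$, terminating with a wrong leader before the $\ell^\star$-wave ever reaches it. Your ``straightforward induction on the height of $G'$'' shows that the legitimate backward messages eventually reach $C^\star$; it does not show that an illegitimate confirmation cannot happen first. The paper closes this with Lemma~\ref{l:leader:correct}, proved via the $follow_C$ machinery of Proposition~\ref{prop:follow}: for the \emph{first} box to enter $\confirm$ with value $\ell$, every box reachable from it in the BFS DAG $span(C')$ --- which is the whole graph --- must have been in state $\backward$ holding $\ell$, and this is impossible for $\ell>\ell^\star$ because the box containing $\ell^\star$ never adopts a larger value (monotonicity). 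Some argument of this kind (a global reachability/contradiction step, not just a post-stabilization induction) is needed to make your correctness claim, and with it Lemma~\ref{l:leader:stop}-style stability of the $\stopp$ state, go through.
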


We complement the above result by a lower bound, which leaves only a poly-logarithmic gap
in complexity. Interestingly, it also holds for randomized algorithms.
Here, we say that a randomized algorithm works in time $f(n)$, where
$n$ is the size of a network, if it finishes its computation (on
every network) in \tj{$f(n)$ rounds} with probability at least $1-\tau(n)$, where $\tau(n)=o(1)$.

\T
\begin{theorem}\labell{th:lower}
Each (randomized) algorithm
solving leader election problem works in $\Omega(\Delta+D)$ rounds
in the worst case.
\end{theorem}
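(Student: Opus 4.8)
The plan is to prove the two lower bounds $\Omega(D)$ and $\Omega(\Delta)$ separately, since their maximum is $\Omega(\Delta+D)$, and each follows from a distinct and essentially standard argument adapted to the SINR setting.

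For the $\Omega(D)$ bound I would use an information-propagation argument. Construct a family of ``path-like'' networks: take a chain of boxes $C_0, C_1, \ldots, C_{D}$ along a line in the pivotal grid, each box containing a single station, so that consecutive boxes are neighbors and non-consecutive ones are not. In any such network the communication graph has diameter $\Theta(D)$. The key observation is that until the station in $C_0$ has had a chance to influence, through a chain of successful receptions, the station in $C_D$, the station in $C_D$ cannot distinguish between two networks that differ only in the IDs assigned to stations in $C_0$ and $C_1$. Formally, define two instances that agree everywhere except at one endpoint, and argue by induction on rounds that the state of the station in box $C_i$ can depend on the initial configuration of $C_0$ only after at least $i$ rounds. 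Since leader election requires all nodes to learn the (unique) leader ID, and the leader ID differs between the two instances, the station in $C_D$ must behave differently in the two runs, which requires $\Omega(D)$ rounds. For the randomized version, I would make the two instances indistinguishable in distribution for the first $cD$ rounds (the transcript at $C_D$ has identical distribution), so any algorithm halting by then with probability $1-o(1)$ must err with constant probability on one of the instances, a contradiction.

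For the $\Omega(\Delta)$ bound I would exploit the fact that $\Delta$ stations can be packed into a single box, all mutually within range, together with the absence of collision detection. Place $\Delta$ stations with distinct IDs in one box (plus possibly a trivial rest of the network). Any two of these stations that do not transmit, and do not receive, in a given round get no feedback at all (no collision detection), so from their viewpoint the round is silent. A standard adversary/adaptive argument then shows that to elect a leader among $\Delta$ co-located indistinguishable-until-heard stations, at least $\Omega(\Delta)$ rounds are needed: in each round at most one station can transmit ``usefully'' to the others (if two transmit, by the SINR conditions within one box neither is heard, so no information is conveyed to anyone — here I use that all stations in the box are within mutual range and a single box cannot host two simultaneous successful transmissions), hence it takes $\Omega(\Delta)$ rounds before all but one of the $\Delta$ candidates can be ruled out and the surviving leader's ID broadcast. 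For randomized algorithms I would invoke the classical lower bound for leader election / ``finding the minimum'' in a single-hop no-collision-detection channel (of the type in the radio-network literature), which gives $\Omega(\Delta)$ expected rounds even with collision detection and even when IDs are known, and observe that one box behaves exactly like such a channel.

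The main obstacle is the $\Omega(\Delta)$ argument in the SINR model: I must carefully justify that within a single box no ``sophisticated'' use of interference helps — i.e., that two or more simultaneous transmitters in one box are useless because, by Proposition-style reasoning on the SINR ratio with all stations mutually within distance $\gamma\sqrt2 = r$, neither signal clears the threshold $\beta$ at any co-located receiver, and that partial signals convey nothing since there is no collision detection. Combining the two bounds, $\max\{\Omega(D),\Omega(\Delta)\} = \Omega(D+\Delta)$, completes the proof.
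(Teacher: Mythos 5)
Your $\Omega(D)$ argument is fine (the paper does not even spell it out), but the $\Omega(\Delta)$ half — which is the actual content of the paper's proof — has genuine gaps, and for randomized algorithms it cannot work in the form you propose. First, the single-box claim that ``if two transmit, neither is heard'' is false in the SINR model: a receiver in the same box may be much closer to one of the two transmitters than to the other (distances inside a box range from arbitrarily small up to $r$), in which case the SINR ratio at that receiver can exceed $\beta$ and the strong-signal condition can hold, so two simultaneous transmitters in one box \emph{can} produce a successful reception. Second, even granting that, your adversary argument (``at most one useful transmission per round, hence $\Omega(\Delta)$ rounds to rule out all but one candidate'') does not follow: leader election does not require each candidate to be eliminated by a dedicated successful transmission — e.g., the protocol could simply adopt the first station heard alone as the leader, so the question becomes how fast some station can be heard alone, which is a selective-family-type question, not the counting you invoke. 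Third, and decisively, the randomized claim is wrong: there is no classical $\Omega(n)$ randomized lower bound for leader election on a single-hop channel without collision detection; randomized decay-style protocols elect a leader in a single box in polylogarithmic time with high probability (with known $\Delta$, transmitting with probability $1/\Delta$ already isolates a single transmitter within $O(\log n)$ rounds w.h.p.). So a clique of $\Delta$ co-located stations is an \emph{easy} instance for randomized algorithms, and no choice of constants rescues the construction.

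The paper avoids exactly this trap. Its hard family $\m{F}$ consists of two cliques $X_0,X_1$ of $\Delta$ densely packed stations on two parallel lines at distance $r$, with the stations chosen from $2\Delta-1$ candidate positions per line so that the communication graph has exactly one cross edge, at a position unknown to the algorithm. The spacing $d$ is chosen so that (P1) what happens inside one clique is unaffected by the other clique's transmissions, and (P2) a message crosses between cliques only when exactly one station of a side transmits; hence each round can ``probe'' only $O(1)$ of the $\Theta(\Delta)$ possible bridge positions. Until the bridge is found, each side cannot distinguish the full network from its own clique alone, so it cannot safely terminate. A counting argument over the family then shows that an algorithm running for $\Delta/3$ rounds leaves a constant fraction of networks unserved, and — this is the step your construction cannot replicate — the same counting, averaged over the algorithm's random choices, shows some network in $\m{F}$ is served with probability less than $1/3$, giving the $\Omega(\Delta)$ bound for randomized algorithms as well. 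The hardness thus lies in locating a hidden inter-cluster link, not in breaking symmetry inside a cluster; the latter is what your single-box instance tests, and it is not hard.
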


\T
\noindent
The proof of Theorem~\ref{th:lower} is presented in Appendix~\ref{s:lower}.

Finally, using the backbone structure and the leader election algorithm,
we efficiently solve the multi-broadcast problem.
Initially, $k$ distinct messages are located in various
stations of a network, and the values $N$ and $\Delta$ 
are known to each station (while $k$ and $D$ are not known to them).
Our algorithm uses the backbone
and the edges defined by sets $\pred(C)/\suc(C)$ constructed during algorithm
{\GlobalLeader} to broadcast messages located in each box to all stations located
in that box and to build a tree $T$ in a graph of boxes, whose root is set to the
box containing the leader of the network. Then, all messages are gathered in the root
of $T$ and finally they are broadcasted from the root to the whole tree by a simple
flooding algorithm. The following theorem characterizes efficiency of our solution;
the formal proof of this theorem is presented in Appendix~\ref{s:multi}.

\T
\begin{theorem}\labell{t:multi:broadcast}
There is
an algorithm that finishes multi-broadcast in $O(D+k+\Delta\polylog(n))$
rounds.
\end{theorem}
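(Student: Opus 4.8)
The plan is to combine the backbone of Theorem~\ref{th:backbone} with the structure left behind by the leader-election algorithm {\GlobalLeader} of Theorem~\ref{t:leader:election}: after a leader is elected, the pointers $\pred(C),\suc(C)$ describe a DAG on the boxes of the pivotal grid rooted at the box $C_0$ of the network leader, from which I extract a rooted tree $T$ by letting each box keep one $\pred$-neighbor as parent. The algorithm then pipelines a convergecast of all $k$ input messages up $T$ to $C_0$, followed by a broadcast of all of them back down $T$; both are carried by the multi-round primitive of Corollary~\ref{cor:com:box}, which in one multi-round lets a box deliver a (possibly distinct) message to each neighboring box, the receiving station $r^{C}$ then spreading whatever it hears inside its whole box $C$. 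Stage~0 is setup: run {\GlobalLeader}, which builds the backbone $H$ (Theorem~\ref{th:backbone}) and elects the leader in $O(D+\Delta\polylog n)$ rounds; afterwards every station knows the members of its box (Lemma~\ref{l:LocalLearning}) and, after picking one $\pred$-neighbor deterministically, its parent in $T$, and the $\confirm$-wave of {\GlobalLeader} signals termination to every box (the inter-box skew is $O(D)$, absorbed by the flooding-based stages below).

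A key claim to establish is $\mathrm{depth}(T)=O(D)$. I would argue this by unwinding the analysis of {\GlobalLeader}: the minimal leader ID spreads out of $C_0$ at the rate of one box per phase, so once the algorithm has converged the parent of every box received that ID no later than the box itself, whence every root-to-leaf path of $T$ is no longer than the box-eccentricity of $C_0$, which is $O(D)$. If one prefers to avoid the internals of {\GlobalLeader}, the same bound is available by rebuilding the tree from $C_0$ in $O(D)$ extra multi-rounds via a BFS, or by greedy geometric routing towards $C_0$, whose routes have length $O(D)$ by the property of~\cite{CidonKMP95} transferred to the box graph.

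Stage~1 is intra-box gathering. Writing $k_C$ for the number of input messages initially in box $C$, so $\sum_C k_C\le k$, I exploit that all stations of a box are mutually in range, so each box is a single-hop network on at most $\Delta$ stations; by Proposition~\ref{prop:dilsuc} all boxes run in parallel, with only $O(1)$ slowdown, a round-robin in which each station first announces how many input messages it holds (heard by the whole box), then the box-leader broadcasts a transmission schedule, then the scheduled transmissions occur. This costs $O(\Delta+\max_C k_C)=O(\Delta+k)$ rounds and establishes the invariant ``every station of box $C$ knows every message currently attributed to $C$'', which the $r^{C}$-broadcasts of every later multi-round preserve automatically. Stage~2 is the pipelined convergecast: in each multi-round every non-root box forwards, using the $s/r$ pair in the direction of its parent, one not-yet-forwarded message to its parent in $T$, and, since the multi-round handles the directions of $\DIR$ in separate sub-rounds, a box collects one message from each of its $\le 20$ children in the same multi-round. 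By the standard tree-pipelining bound --- $O(\mathrm{depth}+k)$ rounds when each node forwards one item per round --- this takes $O(D+k)$ multi-rounds and deposits all $k$ messages at (every station of) $C_0$. Stage~3 is the symmetric broadcast: $C_0$ emits the $k$ messages one per multi-round, each descends one level of $T$ per multi-round and is spread inside every box it enters, so pipelining again gives $O(D+k)$ multi-rounds, after which every station holds all $k$ messages. Adding the four stages yields $O(D+k+\Delta\polylog n)$.

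I expect two points to need genuine care. The first is the bound $\mathrm{depth}(T)=O(D)$: this is where either the full analysis of Theorem~\ref{t:leader:election} or the greedy-routing property of~\cite{CidonKMP95} (translated into the box graph) must be invoked, and it is the place where the geometry of the pivotal grid really enters. The second is showing that the convergecast of Stage~2 costs $O(D+k)$ additively rather than multiplicatively, despite boxes having up to $20$ children and unbounded forwarding queues; this reduces to the classical pipelined-convergecast lemma on trees once one observes that a single multi-round already lets a box receive one message from every child while pushing one message to its parent. No fresh interference analysis is required: the SINR layer has been fully packaged into Theorem~\ref{th:backbone}, Corollary~\ref{cor:com:box}, and Proposition~\ref{prop:dilsuc}.
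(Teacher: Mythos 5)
Your plan is essentially the paper's own algorithm: intra-box gathering via the backbone in $O(\Delta+k)$ rounds, extraction of a tree $T$ of boxes from the $\pred/\suc$ pointers left by {\GlobalLeader} (each box keeping one predecessor), a pipelined convergecast of all $k$ messages to the leader's box, and a pipelined flooding back down, all over multi-rounds. Your worry about $\mathrm{depth}(T)=O(D)$ is also resolved the same way the paper resolves it: by Proposition~\ref{prop:leader:basic} every $\pred$-edge is an edge of $span(C_0)$, i.e.\ it joins boxes whose box-distances to the root differ by exactly one, so $T$ is automatically a BFS-layered tree of depth at most the box-eccentricity of $C_0$, which is $O(D)$; your informal ``wave spreads one box per phase'' argument is the same fact, and your BFS-rebuild fallback would also work. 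For the $O(D+k)$ convergecast you invoke a generic tree-pipelining lemma where the paper cites Cidon et al.~\cite{CidonKMP95} (greedy routing on leveled graphs finishes in $D+k-1$ steps); these are the same statement, so no substantive difference there.

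The one genuine gap is termination detection. Since $k$ and $D$ are unknown to the stations, your Stage~2 as written gives the root box no way to decide when it has received \emph{all} messages and may switch to Stage~3, and likewise no box knows when it may stop; ``run the convergecast for $O(D+k)$ multi-rounds'' is not executable without knowing $D+k$. The paper closes this hole with an explicit preliminary step you omitted: after fixing $T$, each box $C'$ computes $k(C')$, the number of messages in its subtree, by a bottom-up summation ($k(C')\gets k(C')+\sum_{C''\in\suc(C')}k(C''))$, which takes $O(D)$ multi-rounds and lets the root learn $k$ (and every box learn when its role is finished) before the pipelined collection and the $k$-round flooding begin. This counting convergecast (or some equivalent in-band termination signal, e.g.\ attaching subtree counts to the forwarded messages) must be added to your argument; with it, the round count $O(D+k+\Delta\polylog n)$ follows exactly as you state.
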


\T\T\T\T
\section{Conclusions and Extensions} 
\label{s:conclusion}


\T
It is worth noting that our algorithm constructing the backbone is local in nature,
and it tolerates a small constant inaccuracy in the location data.
In order to avoid difficulties in construction of efficient SINR-selector, it can be
chosen randomly and verified for correctness (it holds with large probability).

We have not specified exact complexities of algorithms that rely on the
size of {\sselector}, because they depend on the actual value of the path loss 
parameter $\alpha$. Since in our applications {\sselector} are applied $\log N$
times, Lemma~\ref{l:geometric-selector} implies that the $\polylog n$ factor
in the complexity formulas of our algorithms is equal to $\log^3 N$
for $\alpha>2$ and $\log^4 N$ for $\alpha=2$.

\bibliography{references}
\bibliographystyle{abbrv}

\clearpage
\appendix

\begin{center}
{\bf\Large Appendix}
\end{center}

\def\GSelector{
In this proof all references to (boxes of) the grid concern boxes of the pivotal
grid.
We show the existence of the appropriate geometric communication schedule using
the probabilistic method.
\tj{We prove that there exists a gcs satisfying property (a) from the definition
of {\sselector}. The proof for (b) can be obtained in such a way that all stations
from $A\setminus B$ are ``ignored'' in the consideration and the possibility that
they can generate additional interference for reception of messages send by elements of $B$
can be tackled by assuming that noise is increased to $(1+\varepsilon/2)\cN$. One can
guarantee that the actual interference imposed by elements of $A$ on elements of $B$
is actually at most $\cN\cdot \varepsilon/2$ by increasing the parameter $\delta$
appropriately. Then the original proof for (a) is applied, where the sensitivity
parameter in the definition of SINR is $\varepsilon'$ such that
$$(1+\varepsilon')(1+\varepsilon/2)\beta\cN=(1+\varepsilon)\beta\cN.$$
}

First, we build a classical communication schedule $\m{S}$ randomly.
For each identifier $v\in[1,N]$ and each round $j$,
the $j$th position of $\m{S}(v)=1$
%
with probability $\frac{1}{\Delta}$ and all such random choices
are independent.

Notations introduced in this proof are summarized in Table~\ref{tab:gsel}.
\begin{table}[h]%
\caption{Notations in the proof of Lemma~\ref{l:geometric-selector} (concerning the
choice of $S_t$ for $t\in\NAT$).}
\label{tab:gsel}\centering%
\begin{tabular}{|c|l|}
\hline
Notation & Definition\\
\hline\hline
$m_i$ & the number of stations in the\\
& box $C_i$\\
\hline
$Y_i$ & the number of transmitting \\
&stations
in the box $C_i$\\
\hline
$O_i$ & $C_i$ is interesting,
exactly one\\ &station
from
$C_i\cap A'$ sends \\
&a message  and no other station \\
&from $C_i\cap A$ sends a message\\
\hline
$X$ & the number of boxes with one\\
&station
chosen (i.e. $|\{i\,|\, Y_i=1\}|$)\\
\hline
$Z_j$ & $|\{i\,|\, Y_i\in [2^j,2^{j+1})\}|$; i.e., $X=Z_0$;\\
\hline
\end{tabular}
\end{table}

Let $A$ be any set of $m\leq N$ stations on the plane such that at
most $\Delta$ stations are located in each cell of the grid.
Let $A'\subset A$, $|A'|=m/2$.
Let
$m_i$ be the number of elements of $A$ in the cell $C_i$ for $i\in\INT^2$.

Our goal is to show that, the following events appear simultaneously with
relatively high probability in each round of randomly chosen schedule
(according to the above distribution),
regardless of the distribution of stations on the plane:
\begin{itemize}
\item
there are many boxes (at least $m/(128\Delta)$) in which exactly one station from $A'$
(and no station from $A\setminus A'$) transmits;
\item
there are relatively few boxes (at most $\frac1{2^i}\cdot\frac{m}{\Delta}$) in which the number of
transmitting stations from $A$ is
in the range $[2^i, 2^{i+1})$ for each $i\in[1,\log m]$;
\end{itemize}
The above two facts combined with the dilution procedure with appropriate parameters
should guarantee that there are relatively many stations transmitting successfully
in each round, with high probability. Then, using the probabilistic method, we show
that there exists an appropriate {\sselector}.

We say that a box $C$ is {\em interesting} iff the fraction of the number of all elements
of $A'$ in $C$ to the number of elements of $A$ in $C$ is at least $\frac14$.
The relationship $|A'|=|A|/2=m/2$ implies that there are at least $m/4$ stations
in interesting boxes.

Let us fix a round $t$. Let $O_i$ be the event that: $C_i$ is interesting,
exactly one station from
$C_i\cap A'$ sends a message in a round $t$ and no other station from $C_i\cap A$
sends a message in $t$.
Given an interesting box $C_i$ with $m_i/4\leq m'_i\leq
\Delta$ stations from $A'$, the probability that $O_i$ appears is
$$P(O_i)=\frac{m'_i}{\Delta}(1-\frac{1}{\Delta})^{m_i-1}\geq
\frac{m'_i}{4\Delta}\geq \frac{m_i}{16\Delta}.$$
%
Let $X$ be a random variable denoting the number of cells in
which the event $O_i$ appears in $t$. Then,
$$E(X)\geq\sum_{\{i\,|\,i\mbox{ is interesting}\}}\frac{m_i}{16\Delta}\geq\frac{m}{64\Delta}.$$
Since the choice of transmitting stations in various cells are
independent, we obtain
\begin{equation}\label{e:onecell}
P(X<\frac{m}{128\Delta})<e^{-\frac{m}{8\cdot 64\cdot\Delta}}
\end{equation}
by the Chernoff bound (\ref{eq:ch2}).

Now, our goal is to estimate (i.e., limit from above) the number of
cells which generate large noise, i.e., the cells in which many
stations from $A$ broadcast in a round. Let $Y_i$ be a random variable denoting
the number of broadcasting stations in the box $C_i$ in a fixed round (see Table~\ref{tab:gsel}).
Certainly, $E(Y_i)=\frac{m_i}{\Delta}$. Using the Chernoff bound (\ref{eq:ch1}),
we have
$$
P(Y_i\geq p\cdot E(Y_i))=P(Y_i>p\cdot
\frac{m_i}{\Delta})<\left(\frac{e}{p}\right)^{\frac{pm_i}{\Delta}}
$$
for $p>1$. Therefore
\begin{equation}\label{e:dwadoi}
P(Y_i\geq 2^j)=P(Y_i\geq
2^j\frac{\Delta}{m_i}\cdot\frac{m_i}{\Delta})<\left(\frac{em_i}{2^j\Delta}\right)^{2^j}
\end{equation}
for $j\geq 1$.
Let $Z_j$ denote the number of cells in a round in which the number
of broadcasting stations is in the range $[2^j,2^{j+1})$. Using
linearity of expectation, we obtain
$$E(Z_j)=\sum_i P(Y_i\in [2^j,2^{j+1}))\leq \sum_i P(Y_i\geq 2^j)$$

Thus, using (\ref{e:dwadoi}), we obtain
\begin{equation}\label{e:dwadointer}
\begin{array}{rcl}
E(Z_j)&\leq&\sum_i
\left(\frac{em_i}{2^j\Delta}\right)^{2^j}\\
&=&\left(\frac{e}{2^j\Delta}\right)^{2^j}\cdot\sum_i
m_i^{2^j}\\
&\leq&
\left(\frac{e}{2^j\Delta}\right)^{2^j}\frac{m}{\Delta}\cdot
\Delta^{2^j}\\
&=&\left(\frac{e}{2^j}\right)^{2^j}\cdot\frac{m}{\Delta}
\end{array}
\end{equation}
The inequality in the above calculations follows from the property
$\sum_i m_i^{2^j}\leq \frac{m}{\Delta}\cdot \Delta^{2^j}$ which is satisfied
due to the fact that $0\leq m_i\leq \Delta$ for each $i$ and the function $f(x)=x^{2^j}$
is convex for each $j\geq 0$.

Now, we would like to estimate the probabilities that the number of boxes
with the number of transmitting station in the interval $[2^j,2^{j+1})$ is larger
than $\frac{m}{\Delta}\cdot\frac1{2^j}$.
%
Since the choices of transmitting stations are independent, we can use the Chernoff bound (\ref{eq:ch1}) once
again:
{\small
\begin{equation}\label{e:dwadofinal}
\begin{array}{rcl} P(Z_j\geq
\frac{m}{\Delta}\cdot\frac1{2^j}) &= & P\left(Z_j\geq \left(\frac{e}{2^j}\right)^{2_j}\frac{m}{\Delta}\cdot
\left(\frac{2^j}{e}\right)^{2^j}\frac1{2^j}\right)\\
& < &\left(e\cdot
2^j\cdot\left(\frac{e}{2^j}\right)^{2^j}\right)^{\frac{m}{\Delta}\cdot\frac1{2^j}}\\
&\leq & \left(\frac{e^2}{2^{j/2}}\right)^{m/\Delta}\\ & < &
e^{-m/\Delta}
\end{array}
\end{equation}
}
for $j>10$, thanks to (\ref{e:dwadointer}). Since we concentrate on the asymptotic analysis here, we do not worry
about finding accurate upper bound on the above probability for $j\leq 10$. Even if
this probability is large, we can diminish its impact by applying
$d$-dilution for
some constant $d$.

Our goal is to show that, with (relatively) high probability, there are many
boxes with exactly one transmitting stations from $A'$ and this station transmits successfully
(i.e., its message is heard by all its neighbors, it is not scrambled by noise
from other boxes)
First, let us assume that our input set $A$ is $d$-diluted for some $d\in\NAT$.
%
In such settings, we define a {\em noise} heard in an nonempty box $C$ in a round $t$
is the maximum (over all positions in $C$) of the power of all stations
transmitting in $t$, except stations located in $C$.
Given a cell $C$ with $p$ stations sending a message, we would
like to calculate the overall amount of noise generated by $C(k,j)$,
heard in all nonempty boxes (i.e., the sum of noises heard in all
boxes $C'(k',j')$ except of $C(k,j)$ such that their grid coordinates
satisfy $|k'-k|\mod d=0$ and $|j'-j|\mod d=0$).
This noise is
\begin{equation}\label{e:dimension}
\begin{array}{rcl}
O\left(\sum_{i=1}^{\infty}\frac{p}{(di)^{\alpha}}\cdot
(8i)\right)&=&O\left(8p\sum_{i}d^{-\alpha}i^{1-\alpha}\right)\\
&=&O\left(\frac{p}{d^{\alpha}}\right)
\end{array}
\end{equation}
for $\alpha>2$, since the number of boxes $C'(k',j')$ such that
$\max(|k'-k|,|j'-j|)=di$ is $\leq 8i$.
Assume that the number of boxes with the number of transmitting
stations in the range $[2^j, 2^{j+1})$ is at most $\frac{m}{\Delta 2^j}$ for each
$j\leq \log m$. Then, the overall noise heard in the network
(i.e., the sum of noises in non-empty boxes)
is
\begin{equation}\label{e:dimension2}
O\left(\sum_{j=1}^{\log m}\frac{2^{j+1}}{d^{\alpha}}\cdot \frac{m}{\Delta 2^j}\right)=O\left(\frac{m}{\Delta}\cdot\frac{\log m}{d^{\alpha}}\right)
\end{equation}
Therefore, if $d>(c\log n)^{\frac{1}{\alpha}}$ for large enough
constant $c$, the overall noise is at most $\frac{m}{4\cdot 128\cdot \Delta}$. In such
case, the number of cells with noise greater than $1/2$ is
at most $\frac{m}{2\cdot 128\cdot\Delta}$.
According to (\ref{e:onecell}) and (\ref{e:dwadofinal}), the
probability that
\begin{enumerate}
\item[(a)]
there are at least $\frac{m}{128\Delta}$ boxes in which exactly one station transmits; and
\item[(b)]
there are at most $\frac{m}{2\cdot 128\cdot \Delta}$ boxes in which the noise can prevent a successful
transmission
\end{enumerate}
is at least
$$1-e^{-m/\Delta}\log m\geq 1-e^{-m/2\Delta}$$
for $m\geq 2\Delta\log\log N$ and $d$-diluted configuration for
$d>(c\log m)^{\frac{1}{\alpha}}$. Therefore the probability that
there are at least $\frac{m}{128\Delta}$ successful transmissions
is at least $1-e^{-m/2\Delta}$ as well.

Now, we are ready to show that there exists $(N,d,\Delta,1/2)$-{\sselector}
of size $O(\Delta\log N)$ for $d$-diluted
configurations (inputs), where $d=O(\log^{1/\alpha}m)$.
Contrary assume that it is not the case.
This assumption implies that the following event $E$ appears with probability $1$:
\begin{quote}
there
exists a set $A$ of size $m\leq N$ and its subset $A'$ of size $m/2$
such that no element of $A'$ broadcasts successfully in
$O(\Delta\log N)$ rounds.
\end{quote}
Our goal is to obtain contradiction by showing that this probability is smaller than $1$.
Therefore, we would like to find a reasonable upper bound on the number of possible
choices of $A$ and $A'$. However, since $A$ is determined not only by IDs of stations
but also by their positions
on the plane, the actual number of choices is infinite for fixed $N$. Fortunately,
our analysis does not rely on the actual positions of stations -- it is only important
which box a station belongs to. Moreover, our considerations are immune on shifts of
configurations. Finally, since the noise decreases with increase of
distance, we may assume that all stations are located in a matrix of $m\times m$ boxes
which is $d$-diluted (i.e., consecutive rows/columns of the matrix are
separated by $d=O(\log^{1/\alpha}m)$ empty rows/columns).
Therefore, the probability of the above defined event $E$
is at most
$$\begin{array}{rcl}
\sum_{m=1}^N {N\choose m}\cdot m^{2m}\cdot {m\choose m/2}
\left(e^{-\frac{m}{2\Delta}}\right)^{c\Delta\log N}&\leq&\\ \sum_{m=1}^N N^m\cdot
m^m\cdot (2e)^{m/2}\cdot e^{-cm\log N/2}&<& 1
\end{array}$$ for large enough
constant $c$, since:
\begin{itemize}
\item
there are ${N\choose m}$ possible choices of IDs for $A$ of size $m$;
\item
each element of $A$ can occupy one of $m^2$ boxes;
\item
$A'$ of size $m/2$ can be chosen in $m\choose m/2$ ways;
\item
the probability that no element of $A'$ transmits successfully in a round
is at most $e^{-\frac{m}{2\Delta}}$.
\end{itemize}
Note that, up to this point, we build a classical communication schedule since
the random choices depended only on IDs.
In order to transform this classical communication schedule working
for $d$-diluted instances into
a {\sselector}, 
it is sufficient to apply the dilution procedure with parameter
$d=O((\log n)^{\frac1{\alpha}})$ for arbitrary instances of the problem
(i.e., sets $A$) and $\m{S}$.
This gives a $(N,d,\Delta,\frac12)$-{\sselector} of size
$O(\Delta\log N\cdot (\log N)^{\frac{2}{\alpha}})=O(\Delta\log^2
N)$, provided $\alpha>2$.
Recall that we proved that these bounds hold under the assumption
that $m\geq 2\Delta\log\log N$. If $m<2\Delta\log\log N$, we can
apply a standard $(n,k,\varepsilon)$-selector of size
$O(\Delta\log N\log\log N)$ from \cite{BonisGV03} (and
interleave it with the selector provided above).

Finally, if $\alpha=2$ then the noise calculated in (\ref{e:dimension}) is
$O(p\log N/d^{\alpha})$ (since $\sum_{i=1}^m=O(\log N)$ and the overall noise calculated in (\ref{e:dimension2})
is $O(\frac{m}{\Delta}\cdot \frac{\log^2N}{d^{\alpha}}$. Therefore, in order
to bound the noise from above by $\frac{m}{32\Delta}$, $d=O(\log n)^{\frac{2}{\alpha}}$
is sufficient. Using $d$-dilution, we obtain a {\sselector} of size
$O(\Delta\log N\cdot d^2)=O(\Delta\log^3 N)$.

\comment{ 
Placement independence of the above construction as well as the fact
that it builds the $\delta$-dilution of a classical communication schedule
follow directly from the construction.
}

}

\section{Existence of Efficient Geometric Selector: Proof of Lemma~\ref{l:geometric-selector}}
\labell{s:GSelectorA}

\paragraph{Technical preliminaries.}
The following widely-known facts will be used later in this section.

\begin{lemma}
The following equations hold:
\begin{enumerate}
\item
$(1-1/p)^p\geq 1/4$ for $0<p\leq 1/2$;
\item
$(1-1/p)^p\leq 1/e$ for $0<p<1$;
\item
$\sum_{i=1}^{\infty} 1/i^{\alpha}=O(1)$ for each $\alpha>1$;
\item
$\sum_{i=1}^{n} 1/i=O(\log n)$ for $n\in\NAT$.
\end{enumerate}
\end{lemma}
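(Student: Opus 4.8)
The plan is to prove the four inequalities one at a time; each is classical, with items~1--2 following from elementary manipulations of the exponential and logarithm, and items~3--4 from comparison with an integral.

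For item~2, I would start from the universal bound $1+x\le e^x$: setting $x=-1/p$ gives $1-1/p\le e^{-1/p}$, and raising both nonnegative sides to the power $p\ge 1$ yields $(1-1/p)^p\le (e^{-1/p})^p=1/e$ (equivalently, in the $(1-p)^{1/p}$ reading, raise $1-p\le e^{-p}$ to the power $1/p$). For item~1, the cleanest route is via the logarithm's power series: for $p\ge 2$ one has $-p\ln(1-1/p)=p\sum_{k\ge1}\frac{1}{kp^{k}}=\sum_{k\ge1}\frac{1}{kp^{k-1}}$, and since every term $1/(kp^{k-1})$ is nonincreasing in $p$, the sum is largest at $p=2$, where it equals $2\sum_{k\ge1}(1/2)^k/k=-2\ln(1/2)=\ln 4$; hence $-p\ln(1-1/p)\le\ln 4$, i.e.\ $(1-1/p)^p\ge 1/4$, whenever $0<1/p\le 1/2$. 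An alternative is to verify that $p\mapsto p\ln(1-1/p)$ has nonnegative derivative $\ln(1-1/p)+\frac{1}{p-1}\ge 0$ on $p>1$ (which itself follows from $-\ln(1-t)\le t/(1-t)$ at $t=1/p$), so $(1-1/p)^p$ is nondecreasing in $p$ and equals exactly $1/4$ at $p=2$.

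For items~3 and~4, I would use the integral test: since $x\mapsto x^{-\alpha}$ is decreasing for $\alpha>0$, we have $\sum_{i=2}^n i^{-\alpha}\le\int_1^n x^{-\alpha}\,dx$. For $\alpha>1$ this gives $\sum_{i=1}^n i^{-\alpha}\le 1+\frac{1}{\alpha-1}$, a bound independent of $n$, hence $\sum_{i=1}^{\infty} i^{-\alpha}\le 1+\frac{1}{\alpha-1}=O(1)$; for $\alpha=1$ it gives $\sum_{i=1}^n i^{-1}\le 1+\ln n=O(\log n)$.

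None of these steps presents a genuine obstacle; the only point requiring a little care is item~1, where one must pin down the exact constant $1/4$ (so that crude estimates such as the $1/(2e)$ coming from $(1-1/p)^{p-1}\ge 1/e$ are not good enough) and get the monotonicity direction right — the evaluation of the series at $p=2$ above is the most transparent way to do this.
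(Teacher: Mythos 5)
Your proof is correct, and there is nothing in the paper to compare it against: the lemma is stated there as a collection of ``widely-known facts'' with no proof given. Your arguments are the standard ones — $1+x\le e^x$ for item 2, the integral comparison $\sum_{i=2}^n i^{-\alpha}\le\int_1^n x^{-\alpha}\,dx$ for items 3 and 4, and for item 1 either the series evaluation $-p\ln(1-1/p)=\sum_{k\ge1}\frac{1}{kp^{k-1}}\le\ln 4$ at $p=2$ or the monotonicity of $p\mapsto(1-1/p)^p$ on $p>1$, both of which correctly pin down the tight constant $1/4$ (attained at $p=2$). You were also right to flag and repair the ranges in items 1--2 as printed: with $0<p\le 1/2$ (resp.\ $0<p<1$) the base $1-1/p$ is negative, so the intended reading is $(1-p)^{1/p}$, equivalently $(1-1/p)^p$ for $p\ge 2$ (resp.\ $p\ge 1$), which is exactly the form used later in the selector proof, e.g.\ $(1-1/\Delta)^{m_i-1}\ge 1/4$.
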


\begin{lemma}\labell{lm:Chernoff}(Chernoff bounds).
Let $X_1,\ldots,X_n$ be independent Bernoulli random variables, and let $X=\sum_{i=1}^nX_i$.
Then,
\begin{enumerate}
\item
For any $\zeta>1$ and $\mu\geq E(X)$,
\vspace*{-1ex}
\begin{equation}\label{eq:ch1}
P(X\geq \zeta\mu)<\left(\frac{e}{\zeta}\right)^{\zeta\mu}
\end{equation}
\item
For any $0<\delta<1$ and $\mu\leq E(X)$,
\vspace*{-1ex}
\begin{equation}\label{eq:ch2}
P(X<(1-\delta)\mu)<e^{-\frac{\delta^2}{2}\mu}
\end{equation}
\end{enumerate}
\end{lemma}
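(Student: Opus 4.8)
The plan is to prove both inequalities by the standard exponential-moment (Chernoff--Bernstein) method: bound a one-sided deviation of $X$ by applying Markov's inequality to the nonnegative variable $e^{tX}$ (respectively $e^{-tX}$), use independence to factor the moment generating function, bound each factor via $1+x\le e^x$, and finally optimize the free parameter $t>0$.

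For the upper tail (\ref{eq:ch1}), write $p_i=E(X_i)$, so $E(X)=\sum_i p_i$. For any $t>0$, Markov's inequality gives $P(X\ge\zeta\mu)=P(e^{tX}\ge e^{t\zeta\mu})\le e^{-t\zeta\mu}E(e^{tX})$, and by independence $E(e^{tX})=\prod_i E(e^{tX_i})=\prod_i(1+p_i(e^t-1))\le\exp\!\big((e^t-1)\sum_i p_i\big)=\exp\!\big((e^t-1)E(X)\big)\le\exp\!\big((e^t-1)\mu\big)$, where the last step uses $\mu\ge E(X)$ together with $e^t-1>0$. Choosing $t=\ln\zeta$ (valid since $\zeta>1$) yields $P(X\ge\zeta\mu)\le\zeta^{-\zeta\mu}e^{(\zeta-1)\mu}<\zeta^{-\zeta\mu}e^{\zeta\mu}=(e/\zeta)^{\zeta\mu}$, the last inequality strict because $e^{-\mu}<1$. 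This is exactly (\ref{eq:ch1}).

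For the lower tail (\ref{eq:ch2}), symmetrically, for any $t>0$ we have $P(X<(1-\delta)\mu)\le P(e^{-tX}\ge e^{-t(1-\delta)\mu})\le e^{t(1-\delta)\mu}E(e^{-tX})$, and $E(e^{-tX})=\prod_i(1-p_i(1-e^{-t}))\le\exp\!\big(-(1-e^{-t})\sum_i p_i\big)\le\exp\!\big(-(1-e^{-t})\mu\big)$ since $1-e^{-t}>0$ and $\mu\le E(X)$. The exponent $\mu\big(t(1-\delta)-(1-e^{-t})\big)$ is minimized over $t>0$ at $t=-\ln(1-\delta)>0$, giving $P(X<(1-\delta)\mu)\le\exp\!\big(\mu\,[-\delta-(1-\delta)\ln(1-\delta)]\big)$. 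It then remains to verify the elementary inequality $-\delta-(1-\delta)\ln(1-\delta)\le-\delta^2/2$ on $(0,1)$, i.e.\ $g(\delta):=(1-\delta)\ln(1-\delta)+\delta-\delta^2/2\ge0$; since $g(0)=0$, $g'(0)=0$, and $g''(\delta)=\delta/(1-\delta)\ge0$ on $[0,1)$, the function $g$ is nonnegative there, which gives (\ref{eq:ch2}).

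I expect the only step that is not purely mechanical to be the last calculus inequality $g\ge0$, which is what upgrades the exactly optimized lower-tail exponent to the clean Gaussian-type form $e^{-\delta^2\mu/2}$; everything in the upper-tail argument is routine once the moment-generating-function factorization is written down. A minor point to handle with care is strictness of the inequalities as stated, which is inherited from $1+x<e^x$ for $x\ne0$ (applicable whenever some $p_i\in(0,1)$, so that $p_i(e^t-1)\ne0$) and from $e^{-\mu}<1$; the degenerate cases ($\mu=0$, or all $p_i\in\{0,1\}$) can be checked directly.
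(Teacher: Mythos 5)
Your proof is correct, and it is the standard exponential-moment (Markov plus moment-generating-function) argument with the usual optimization $t=\ln\zeta$ for the upper tail and $t=-\ln(1-\delta)$ for the lower tail, followed by the calculus inequality $(1-\delta)\ln(1-\delta)+\delta-\delta^2/2\ge 0$; all the computations check out, including $g''(\delta)=\delta/(1-\delta)\ge 0$. There is nothing in the paper to compare it against: the paper states Lemma~\ref{lm:Chernoff} as a ``widely-known fact'' in the technical preliminaries of the appendix and gives no proof, so supplying this standard derivation is exactly what is needed. One small caveat on your remark about strictness: in the degenerate case $\mu=0$ (allowed by the hypothesis $\mu\ge E(X)$ when all $p_i=0$) the strict inequality (\ref{eq:ch1}) actually fails, since both sides equal $1$; this is a harmless imprecision in the statement itself rather than a gap in your argument, and for $\mu>0$ your observation that strictness comes from $e^{-\mu}<1$ (and from $1+x<e^x$ for $x\neq 0$) settles it.
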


We resume the proof of Lemma~\ref{l:geometric-selector}.
\GSelector

\section{Analysis of leader election algorithm}\label{s:app:leader}

In this section we formally analyze algorithm {\GlobalLeader}, that is,
we prove Theorem~\ref{t:leader:election}.

Given a box $C$, we define $span(C)$ as a directed graph of boxes, where
$(C_1,C_2)$ is an edge iff $C_1$ and $C_2$ are neighbors and
$\dist(C,C_2)=\dist(C,C_1)+1$. Note that $span(C)$ is an acyclic graph.
Moreover, we define $follow_C(C_1)$
as the set of vertices accessible from $C_1$ in $span(C)$, i.e.,
$C'$ belongs to $follow_C(C_1)$ iff there exists a path from $C_1$
to $C'$ in $span(C)$.

Let $l_0(C)$ denote the initial value of $\leader(C)$ for a box $C$,
i.e., ID of the smallest station located in $C$.

The following properties follow from the implementation of Algorithms~\ref{alg:states:leader:m1}-\ref{alg:states:leader:m2}.
\begin{proposition}\labell{prop:leader:basic}
At each step of the algorithm {\GlobalLeader} and for each
boxes $C_1,C_2$, it holds: $\leader(C_1)\leq \leader_0(C_1)$;
$C_1\in\pred(C_2)$ iff $C_2\in\suc(C_1)$.
Moreover, if $C_1\in\pred(C_2)$, then $(C_1,C_2)$ is an edge
in $span(C)$, where $\leader_0(C)=\leader(C_1)$ (i.e., the
current leader known to $C_1$ is located in $C$).
\end{proposition}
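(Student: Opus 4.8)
The plan is to prove all three assertions by a joint induction over the phases of \GlobalLeader{} (each phase being Multiround~1, its local updates, Multiround~2, its local updates), the engine being a \emph{flooding invariant} for the values $\leader(\cdot)$.

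First I would dispatch the two easy parts. Since $\leader(C)$ starts at $\leader_0(C)$ and the only assignment to it in Algorithm~\ref{alg:states:leader:u} replaces it by $l=\min_{C'\in\Gamma(C)}\leader(C')$, and only when $l<\leader(C)$, an immediate induction gives $\leader(C)\le\leader_0(C)$ throughout and, moreover, that $\leader(C)$ is always $\leader_0(C')$ for some box $C'$ (a minimum of such numbers is again one); as distinct boxes hold disjoint sets of stations this $C'$ is unique, so the phrase ``$\leader_0(C)=\leader(C_1)$'' in the statement determines $C$ unambiguously. For $C_1\in\pred(C_2)\Leftrightarrow C_2\in\suc(C_1)$ I would just read off Algorithm~\ref{alg:states:leader:m2}: in Multiround~2 of each phase it recomputes $\suc(C)=\{C'\mid C\in\pred(C')\}$ over the neighbors $C'$ of $C$ (these are exactly the boxes that deliver $\pred(C')$ to $C$, and $\pred(C')\subseteq\Gamma(C')$ with $\Gamma$ symmetric), so the equivalence holds at the end of every phase.

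The core step is the flooding invariant: at the end of phase $t$, $\leader(B)=\min\{\leader_0(C)\mid \dist(C,B)\le t\}$ for every box $B$, where $\dist$ is distance in the box graph (adjacency as in Figure~\ref{fig:adjacent}). The ``$\ge$'' half, and hence the exact value, is a short induction: a value newly adopted by $B$ in phase $t$ was held by some box-neighbor of $B$ at the end of phase $t-1$, so by induction it equals $\leader_0(C)$ with $\dist(C,B)\le t$. For the ``$\le$'' half, fixing $C$ with $\dist(C,B)=s\le t$ and a shortest box-path $C=D_0,\dots,D_s=B$, I would show by induction on $j$ that $\leader(D_j)\le\leader_0(C)$ already by the end of phase $j$. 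The hard part is exactly the interplay with the state machine here: when a box lowers its $\leader$ strictly it enters state $\forward$ and hence rebroadcasts that value to \emph{all} its box-neighbors in the next phase (Corollary~\ref{cor:com:box} applied to the multi-round), after which they adopt it in that phase's update; since $\leader$ is monotone non-increasing, this one broadcast is enough even though the box then leaves $\forward$ and stops retransmitting its leader. Verifying that this single broadcast cannot be ``missed'' (e.g.\ by a box-neighbor that was itself temporarily out of $\forward$) is where the argument needs the most care.

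Finally, for the ``moreover'' clause: $\pred(C_2)$ is (re)set only in the phase $\tau$ at which $\leader(C_2)$ last strictly decreased, to the neighbors that transmitted the new $\leader(C_2)$ in Multiround~1 of phase $\tau$. Let $C$ be the box with $\leader_0(C)=\leader(C_1)$ at the instant considered. If $\leader(C_1)$ has not changed since phase $\tau$ (so $\leader(C_1)=\leader(C_2)=\leader_0(C)$), the flooding invariant forces $\dist(C,C_2)=\tau$ (strict decrease at $\tau$) and $\dist(C,C_1)\le\tau-1$ ($C_1$ already held $\leader_0(C)$ at the end of phase $\tau-1$); as $C_1,C_2$ are box-neighbors, $\dist(C,C_1)=\tau-1$, so $(C_1,C_2)$ is an edge of $span(C)$. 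If $\leader(C_1)$ dropped to $\leader_0(C)$ only at some later phase $\tau'>\tau$, a brief case check with the invariant shows that $C_2$ would by then already have a value $\le\leader_0(C)$ — and thus $\pred(C_2)$ would have been reset, reinserting $C_1$ only with a common leader value (the previous case) — unless we are in the single phase $\tau'$ of that adoption, where $\dist(C,C_1)=\tau'$ and $\dist(C,C_2)=\tau'+1$ by the invariant, again an edge of $span(C)$. This completes the proof of Proposition~\ref{prop:leader:basic}.
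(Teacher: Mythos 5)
Your proof is correct, but it takes a genuinely different route from the paper: for Proposition~\ref{prop:leader:basic} the paper offers no argument at all, stating only that the properties ``follow from the implementation'' of Algorithms~\ref{alg:states:leader:m1}--\ref{alg:states:leader:m2}, and it records your flooding invariant only later, as Lemma~\ref{l:leader:progres}, again with the proof left as ``a simple induction.'' You instead front-load that invariant and use it to give the ``moreover'' clause a real proof, which is where the content lies: when $\pred(C_2)$ is set at phase $\tau$, the invariant pins $\dist(C,C_2)=\tau$ and $\dist(C,C_1)=\tau-1$, and your case analysis correctly shows that the only way the clause could later be threatened --- $\leader(C_1)$ dropping below $\leader(C_2)$ while $C_1$ still sits in $\pred(C_2)$ --- is confined to the single phase of that drop, because the drop forces $C_1$ into state $\forward$, so $C_2$ hears the smaller value in the very next multi-round and resets $\pred(C_2)$; in that one phase the distances again come out as $\tau'$ and $\tau'+1$. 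Two readings you adopt are worth making explicit, since the paper's statement is looser than your proof: the $\pred/\suc$ duality is verified at end-of-phase granularity (between the update after Multiround~1 and Multiround~2 it can momentarily fail, so ``at each step'' must be read that way), and the assignment $\pred(C)\gets\{C'\in\Gamma(C)\,|\,\leader(C')=l\}$ is read operationally as ``neighbors from which the value $l$ was received in Multiround~1,'' which is what your case analysis uses; both are the natural readings and the argument goes through under them. What your approach buys is a self-contained justification of exactly the facts that Proposition~\ref{prop:follow} and Lemma~\ref{l:leader:correct} later lean on, and it makes visible that the ``moreover'' clause is not purely syntactic but needs the propagation invariant together with the forward-state rebroadcast; what the paper's terseness buys is brevity, at the cost of hiding that dependence.
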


Now, we state and prove that a box $C'$ can get the state $\backward$
with a particular value of the leader only if all elements of $follow_C(C')$
were in this state before, where $C$ is a box which contains the station
with ID equal to $\leader(C')$.
\begin{proposition}\labell{prop:follow}
Let $C'$ be a box such that $\leader(C')=\leader_0(C)$ and $\state(C')=\waitback$
at the beginning of (multi-)round $t$. Then, the box $C'$ changes its state to $\backward$
or $\confirm$ in round $t$ iff each element  $C''\in follow_C(C')$ satisfies:
$\state(C'')=\backward$ and $\leader(C'')=l_0(C)$ at $t$ or earlier.
\end{proposition}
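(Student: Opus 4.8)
The plan is to prove Proposition~\ref{prop:follow} by induction on the structure of the acyclic graph $span(C)$, working from the ``leaves'' (boxes $C''$ with empty $\suc(C'')$ along the $C$-rooted orientation) inward toward $C'$. First I would fix the box $C$ containing the station with ID $\leader_0(C)$ and observe, using Proposition~\ref{prop:leader:basic}, that whenever some box has $\leader = \leader_0(C)$ and a nonempty $\pred$, all of its $\pred$-members lie strictly closer to $C$; hence the edges recorded in $\pred/\suc$ among boxes currently holding leader value $\leader_0(C)$ form a subgraph of $span(C)$, and $\suc(C'')$ for such a box is exactly the set of neighbors of $C''$ that received value $\leader_0(C)$ \emph{from} $C''$, which are vertices of $follow_C(C'')$ one step out. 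This sets up the correspondence between the algorithmic $\suc$ relation and the static reachability relation $follow_C$.

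Next I would argue the two directions separately. For the ``only if'' direction: inspect Algorithm~\ref{alg:states:leader:u}, line for the $\waitback$ case. A box $C'$ in state $\waitback$ moves to $\backward$ or $\confirm$ in round $t$ precisely when $\mathcal{D} = \suc(C')$, i.e.\ every $C'' \in \suc(C')$ has sent $(\backward, \leader(C'))$ in the preceding multi-round. By Algorithm~\ref{alg:states:leader:m1}, a box sends a $\backward$ message only while in state $\backward$, and it carries its own $\leader$ value, which must equal $\leader(C') = \leader_0(C)$ for it to be counted in $\mathcal{D}$. So each immediate $\suc$-child of $C'$ was in state $\backward$ with leader value $\leader_0(C)$ strictly before round $t$. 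Now apply the inductive hypothesis to each such child $C''$: it reached state $\backward$ only after every element of $follow_C(C'')$ was in state $\backward$ with leader $\leader_0(C)$ at or before that earlier time; since $follow_C(C') = \{\,C'\,\} \cup \bigcup_{C'' \in \suc(C')} follow_C(C'')$ (this is the key structural identity I would isolate as a small lemma or inline observation), every element of $follow_C(C')$ other than $C'$ itself satisfies the claim at a round $< t$, completing the ``only if'' direction. The base case is a box $C'$ whose $\suc$ set is empty at the relevant time: then $\mathcal{D} = \emptyset = \suc(C')$ trivially and $follow_C(C') = \{C'\}$, so there is nothing to check beyond $C'$ itself.

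For the ``if'' direction I would run essentially the same induction in reverse: if every $C'' \in follow_C(C')$ eventually has $\state(C'') = \backward$ and $\leader(C'') = \leader_0(C)$ at some round, then in particular each immediate $\suc$-child does, and — here I need the monotonicity fact that once a box sends a $\backward$ message with value $\leader_0(C)$ it is no longer relevant that the leader value could decrease further, because $\leader_0(C)$ is the global minimum over the relevant subnetwork and so will never be superseded within $follow_C(C')$ — those $\backward$ messages are received by $C'$ in the following multi-round via Corollary~\ref{cor:com:box}, so eventually $\mathcal{D} = \suc(C')$ and $C'$ transitions. I would also need to rule out that $C'$ has already left the $\waitback$ state earlier (e.g.\ by having dropped its leader value and returned to $\forward$); this is handled by the hypothesis $\leader(C') = \leader_0(C)$ and $\state(C') = \waitback$ at the start of round $t$ together with the fact that $\leader(C')$ can only decrease and is already at the minimum value that can ever circulate in $follow_C(C')$.

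The main obstacle I anticipate is the bookkeeping around \emph{time}: the statement is about a single round $t$, but $span(C)$ edges and $\pred/\suc$ sets evolve as leader values drop, so I must be careful that the structural identity $follow_C(C') = \{C'\} \cup \bigcup_{C''\in\suc(C')} follow_C(C'')$ is being applied with respect to the $\suc$ sets \emph{as they stand once leader value $\leader_0(C)$ has stabilized throughout the region}, not at arbitrary intermediate times. I would resolve this by first proving a stabilization lemma: there is a round after which $\leader = \leader_0(C)$ holds permanently on every box of $follow_C(C')$ (because $\leader_0(C)$ is the minimum and propagates monotonically down $span(C)$), and all the induction above is then carried out for rounds past that point; for rounds before it the proposition's hypothesis $\state(C') = \waitback \wedge \leader(C') = \leader_0(C)$ already forces us into the stabilized regime for $C'$ itself, and a downward pass shows the same for its descendants.
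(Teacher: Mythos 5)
Your proof follows essentially the same route as the paper's: an induction along the $C$-rooted structure (the paper inducts on $|follow_C(C')|$, you induct from the leaves of $span(C)$ inward, which is the same thing), combined with the waitback transition rule $\mathcal{D}=\suc(C')$ from Algorithm~\ref{alg:states:leader:u} and the decomposition $follow_C(C')=\bigcup_{C''\in\suc(C')}\bigl(\{C''\}\cup follow_C(C'')\bigr)$. The extra bookkeeping you add (the $\pred/\suc$-versus-$span(C)$ correspondence via Proposition~\ref{prop:leader:basic}, the stabilization of the leader value at the minimum $\leader_0(C)$, and the separate treatment of the ``if'' direction) is left implicit in the paper but is consistent with its argument, so this counts as the same proof with more detail.
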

\begin{proof}
We prove this statement by induction wrt to the size of $follow_C(C')$.
If $|follow_C(C')|=0$, then $\suc(C')=\emptyset$ and therefore the
statement is true (see lines 10-13).

Now assume that the statement holds for each $C''$ such that $|follow_C(C'')|\leq f$
and let $|follow_C(C')|=f+1$. The box $C'$ moves to $\backward$ or $\confirm$
when each element of $\suc(C')$ is in state $\backward$. Since $\suc(C')\subseteq follow_C(C')$,
$|follow_C(C'')|\leq f$ for each $C''\in\suc(C')$. Thus, by inductive assumption,
each element of
$$follow_C(C')=\bigcup_{C''\in\suc(C')}\{C''\}\cup follow(C'')$$
satisfies
$\state(C'')=\backward$ and $\leader(C'')=l_0(C)$ at $t$ or earlier.
\end{proof}
The following lemma can be proved by simple induction wrt to $k$.
\begin{lemma}\labell{l:leader:progres}
 For each nonempty box $C$, the value of $\leader(C)$ after $k$ phases
 is equal to the minimum of values $l_0(C')$ over all boxes $C'$ in distance
 at most $k$ from $C$ in the graph $G'$.
\end{lemma}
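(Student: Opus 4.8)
The plan is to prove, by induction on $k$, the following statement: for every box $C$ and every $k\ge 0$, after $k$ phases of {\GlobalLeader} we have $\leader(C)=f_k(C)$, where
$f_k(C)=\min\{l_0(C')\,:\,\dist(C,C')\le k\}$ and $\dist$ denotes graph distance in the graph of boxes (neighboring boxes joined by an edge), i.e.\ the graph $G'$ of the statement. Alongside this I would carry an auxiliary invariant that makes the induction go through: if $\leader(C)$ strictly decreased during phase $k$ (and, as a boundary case, for every box before the first phase), then $C$ is in state $\forward$ at the start of phase $k+1$. This is immediate from Algorithm~\ref{alg:states:leader:u} --- the branch ``if $l<\leader(C)$ then $\state(C)\gets\forward$'' --- together with the initialization $\state(C)=\forward$ and the fact that Multiround~2 does not touch states; it guarantees that any box whose leader value has just been lowered re-transmits the new value in Multiround~1 of the following phase.

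The base case $k=0$ is the initialization $\leader(C)=l_0(C)=f_0(C)$. For the inductive step I would first record the elementary identity $f_{k+1}(C)=\min\bigl(f_k(C),\min_{C'\in\Gamma(C)}f_k(C')\bigr)$, which holds because a box at distance $\le k+1$ from $C$ is either within distance $k$ of $C$ or within distance $k$ of some neighbor of $C$. By the inductive hypothesis, at the start of phase $k+1$ every box $C'$ holds $\leader(C')=f_k(C')$, and in Multiround~1 each box in a transmitting state broadcasts exactly this value; hence, after the update of Algorithm~\ref{alg:states:leader:u}, $\leader(C)$ becomes the minimum of $f_k(C)$ and the values $f_k(C')$ received from those neighbors $C'$ that actually transmitted. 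Since all these quantities are $\ge f_{k+1}(C)$ (as $f_k(C')\ge f_{k+1}(C)$ for every neighbor $C'$ of $C$), it suffices to show that this minimum is attained, i.e.\ that whenever some neighbor $C'$ satisfies $f_k(C')<f_k(C)$, at least one such neighbor transmits in phase $k+1$.

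This last point is the crux, and it is exactly where the auxiliary invariant is used. Let $C'\in\Gamma(C)$ with $w:=f_k(C')<f_k(C)$. Since $C'$ is a neighbor of $C$, any box within distance $k-1$ of $C'$ lies within distance $k$ of $C$, so $f_k(C)\le f_{k-1}(C')$; combined with $w<f_k(C)$ this gives $f_{k-1}(C')>w=f_k(C')$. By the inductive hypothesis $\leader(C')$ equalled $f_{k-1}(C')$ after $k-1$ phases and $f_k(C')$ after $k$ phases, so it strictly decreased during phase $k$, and the auxiliary invariant puts $C'$ in state $\forward$ at the start of phase $k+1$; hence $C'$ transmits. (For the step $k=0\to 1$ this is even simpler: every box starts in $\forward$, so all boxes transmit in the first Multiround~1.) Consequently $\leader(C)=f_{k+1}(C)$ after $k+1$ phases, and the update rule re-establishes the auxiliary invariant, closing the induction. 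The only genuine obstacle is the bookkeeping of non-transmitting states --- a box in state $\waitback$, $\waitconf$, or $\stopp$ does not broadcast its leader value, so one must rule out that the propagation of a small value ever stalls; the $\forward$-reset after any decrease, captured by the auxiliary invariant, is precisely what prevents this.
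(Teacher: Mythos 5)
Your proof is correct and follows exactly the route the paper intends: the paper only remarks that the lemma ``can be proved by simple induction wrt $k$,'' and you carry out precisely that induction, additionally supplying the one detail the paper glosses over (the auxiliary invariant that any box whose $\leader$ value just decreased is reset to $\forward$ and hence retransmits, so propagation never stalls in a non-transmitting state). No gaps; the argument is a faithful, fleshed-out version of the paper's intended proof.
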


\begin{lemma}\labell{l:leader:correct}
 If $\state(C)\in\{\confirm,\stopp\}$ then $\leader(C)$ is equal to the smallest
 ID of stations in the network, i.e., $\min_{C'}(\leader_0(C'))$.
\end{lemma}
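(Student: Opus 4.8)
The plan is to combine the progress statement of Lemma~\ref{l:leader:progres} with the structural facts about the states $\confirm$ and $\stopp$ coming from Proposition~\ref{prop:follow} and Proposition~\ref{prop:leader:basic}. First I would observe that a box only reaches $\stopp$ by first passing through $\confirm$, so it suffices to treat the case $\state(C)=\confirm$. A box $C$ enters $\confirm$ in one of two ways: either (i) it was a source of $G'$ (i.e.\ $\pred(C)=\emptyset$) in state $\waitback$ with $\suc(C)=\Gamma(C)$ and every successor reported $(\backward,\leader(C))$, so it moves directly to $\confirm$; or (ii) it received $(\confirm,\leader(C))$ from every $C'\in\pred(C)$ while in state $\waitconf$. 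Case (ii) reduces, by induction along $\pred$-chains (which terminate at a source since $G'$ is acyclic by Proposition~\ref{prop:leader:basic} and Proposition~\ref{prop:follow}), to case (i); note that the leader value is preserved along these chains because a box in $\waitconf$ only advances on a $\confirm$ message carrying exactly its own current $\leader(C)$, and $\leader(C)$ can only decrease (Proposition~\ref{prop:leader:basic}), so it must have already equalled the propagated value.

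So the heart of the argument is case (i): a source box $C$ with $\pred(C)=\emptyset$ that enters $\confirm$. Let $\ell=\leader(C)$ at that moment and let $C_0$ be the box containing the station with ID $\ell$, so $\ell=l_0(C_0)$. By Proposition~\ref{prop:follow}, the fact that $C$ moved to $\confirm$ from $\waitback$ with $\suc(C)=\Gamma(C)$ means every $C''\in follow_{C_0}(C)$ was in state $\backward$ with $\leader(C'')=\ell$ at that time or earlier. I would then argue that $follow_{C_0}(C)$ is in fact \emph{all} nonempty boxes: since $C$ is a source of $G'$, and $G'$'s edges point from predecessors to successors along $span(C_0)$-edges (Proposition~\ref{prop:leader:basic}), $C$ must be $C_0$ itself (a box that is not $C_0$ and carries leader value $\ell=l_0(C_0)$ received it from a predecessor, so it is not a source); and $follow_{C_0}(C_0)=span(C_0)$ reaches every box of the (connected) graph of boxes. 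Hence every nonempty box $C''$ satisfies $\leader(C'')\le \ell$ at some point, and since $\leader$ only decreases, $\leader(C'')\le\ell$ for the rest of the execution; in particular $l_0(C'')=\leader_0(C'')\ge \min_{C'}\leader_0(C')$ cannot be smaller than $\ell$ would force — more precisely, combining with Lemma~\ref{l:leader:progres} after sufficiently many phases, $\leader(C'')$ converges to $\min_{C'}l_0(C')$, and the above shows this common value equals $\ell=l_0(C_0)$. Therefore $\ell=\min_{C'}\leader_0(C')$, as required.

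I expect the main obstacle to be the bookkeeping that the leader value $\ell$ is the \emph{same} $\ell$ all along the $\backward$/$\confirm$ waves — i.e.\ ruling out the scenario where a box confirms a stale, too-large leader because a smaller ID from a far part of the network has not yet arrived. This is handled by the "restart" mechanism: Algorithm~\ref{alg:states:leader:u} resets any box that hears a smaller leader back to $\forward$ and rebuilds $\pred(C)$, so a box can only stay on the $\waitback\to\backward\to\waitconf\to\confirm$ track if its leader value is never again lowered; and the source-box condition $\suc(C)=\Gamma(C)$ together with Proposition~\ref{prop:follow} certifies that the \emph{entire} reachable set (hence the whole network, by connectivity) has already adopted $\ell$ before $C$ confirms. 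The remaining steps — acyclicity of $G'$, termination of $\pred$-chains, and $span(C_0)$ spanning all boxes — are routine given Propositions~\ref{prop:leader:basic} and \ref{prop:follow} and the connectivity of the communication graph, so I would state them briefly rather than expand them in full.
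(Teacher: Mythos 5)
Your proposal is correct and takes essentially the same route as the paper: reduce to a box that enters $\confirm$ directly from $\waitback$ (the paper does this by taking the earliest erroneous confirmation, you by induction along $\pred$-chains), observe that such a box is a source of $G'$ and hence still carries its own initial leader value, apply Proposition~\ref{prop:follow} together with the fact that $follow_{C_0}(C_0)$ covers all nonempty boxes, and get the contradiction from monotonicity of $\leader$ at the box holding the globally smallest ID. The only blemish is your concluding step: the detour through Lemma~\ref{l:leader:progres} (``convergence'') is unnecessary and gives the inequality in the unhelpful direction; the clean finish, as in the paper, is simply that the box containing the minimum ID always satisfies $\leader(\cdot)\le\min_{C'}\leader_0(C')$, so it could never have adopted the larger value $\ell$.
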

\begin{proof}
Let $l=\min_{C'}(\leader_0(C'))$ and let $C$ be the box which
contains the smallest leader, i.e., $\leader_0(C)=l$.
For the sake of contradiction, assume that there exists $C'\neq C$ and
round $t$ such that $\state(C')=\confirm$ and $\leader(C')\neq l$ in
round $t$. Let $t$ be the smallest such round and let $C'$ be the box
satisfying the above properties at round $t$.
According to the algorithm, as no box is in the state $\confirm$ before $t$,
$C'$ moves to the states $\confirm$ since
all its neighbors are in state $\backward$ with leader equal to $\leader(C')$
(see lines $10-13$).
This in turn implies that $\pred(C')$ is empty and thus $\leader(C')$ is equal
to its initial leader $\leader_0(C')$.
Then, Proposition~\ref{prop:follow} implies that each box in $follow_{C'}(C')$
was in state $\backward$ with leader equal to $\leader_0(C'))$ at $t$ or earlier.
However, $follow_{C'}(C')$ is equal to the whole graph. Now, we get a contradiction,
since box $C$ cannot change $\leader(C)$ to $\leader_0(C')$, since $\leader_0(C)<\leader_0(C')$.
\end{proof}
The following lemma implies that each box can finish algorithm after it enters
state $\stopp$.
\begin{lemma}\labell{l:leader:stop}
 If $\state(C)=\stopp$ after some phase then $\state(C)$ will not change in
 the following phases.
\end{lemma}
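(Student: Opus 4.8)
The plan is to prove the statement by induction on the number of phases that have passed since $C$ first enters state $\stopp$. The base case is immediate (it is the phase after which $\state(C)=\stopp$ first holds), so everything reduces to a single step: assuming $\state(C)=\stopp$ at the start of a phase, I must show $\state(C)=\stopp$ still holds at its end.

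First I would isolate where a state change could possibly come from. By inspection of Algorithms~\ref{alg:states:leader:m1}--\ref{alg:states:leader:m2}, the only step that can modify $\state(C)$ during a phase is the update of Algorithm~\ref{alg:states:leader:u} (Multiround~2 merely recomputes $\pred(C)$ and $\suc(C)$). In that update, let $l=\min_{C'\in\Gamma(C)}\leader(C')$ be computed from the pre-update values. If $l<\leader(C)$, then $C$ is reset to $\forward$; otherwise the \textbf{CASE} statement is executed, and since it has no branch for $\stopp$, a box already in state $\stopp$ is left untouched, and so is $\leader(C)$. Hence it suffices to rule out the reset, i.e.\ to prove that $l\ge\leader(C)$ whenever $\state(C)=\stopp$.

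For this I would invoke the two invariants already established. Let $l_0^\star=\min_{C''}l_0(C'')$ denote the smallest initial leader ID in the network. Lemma~\ref{l:leader:correct} gives $\leader(C)=l_0^\star$, since $\state(C)=\stopp$. Lemma~\ref{l:leader:progres} gives, for every box $C'$, that $\leader(C')$ equals the minimum of the values $l_0(\cdot)$ over a nonempty set of boxes, hence $\leader(C')\ge l_0^\star$. Therefore $l=\min_{C'\in\Gamma(C)}\leader(C')\ge l_0^\star=\leader(C)$, the reset condition $l<\leader(C)$ fails, and by the previous paragraph the update leaves $\state(C)=\stopp$ (and $\leader(C)=l_0^\star$) in place. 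The induction then closes, giving $\state(C)=\stopp$ after every subsequent phase. The only point needing a moment's care is the bound $\leader(C')\ge l_0^\star$ for all boxes at all times, but this is built into every $\leader$-assignment in Algorithm~\ref{alg:states:leader:u}, which always copies a value ultimately originating from some $l_0(\cdot)$, and is exactly what Lemma~\ref{l:leader:progres} records; so there is no real obstacle, the proof is entirely a matter of tracking which invariant supplies which inequality.
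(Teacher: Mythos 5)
Your proof is correct and follows essentially the same route as the paper's: both reduce the claim to the observation that the only way out of state $\stopp$ is the reset triggered by receiving a leader value strictly smaller than $\leader(C)$, and both rule this out via Lemma~\ref{l:leader:correct}, which guarantees $\leader(C)$ is already the global minimum when $\state(C)=\stopp$. Your additional appeal to Lemma~\ref{l:leader:progres} to justify that every box's $\leader$ value is bounded below by the global minimum just makes explicit what the paper leaves implicit, so no further changes are needed.
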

\begin{proof}
Lemma~\ref{l:leader:correct} implies that a box $C$ enters state $\confirm$
only when $\leader(C)=\min_{C'}(\leader_0(C'))$. Thus, after entering $\confirm$,
the box $C$ does not change its leader $\leader(C)$ and therefore it can only
change state to $\stopp$. 
Thus, $C$ can not
change the state $\stopp$, since its leader $\leader(C)$ is the smallest in the network then,
while it can leave $\stopp$ only after receiving a smaller value of the leader.
\end{proof}

\tj{The following lemma combined with Lemma~\ref{l:leader:stop} directly implies the result stated in Theorem~\ref{t:leader:election}.}
\begin{lemma}\labell{l:leader:time}
Let $\leader_0(C)=\min_{C'}(\leader_0(C'))$ and let $D=\max_{C'}(\dist(C,C'))$. Then,
each box is in state $\stopp$ after $3D+1$ phases.
\end{lemma}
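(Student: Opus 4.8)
\textbf{Proof plan for Lemma~\ref{l:leader:time}.}
The plan is to track the ``wave'' of the correct leader value $\leader_0(C)$ as it propagates through the graph of boxes $G'$, and then track the confirmation wave coming back. Throughout, let $l=\leader_0(C)$ be the global minimum ID and recall that once a box learns a leader value it never increases it (Proposition~\ref{prop:leader:basic}). First I would invoke Lemma~\ref{l:leader:progres}: after $k$ phases every box $C'$ with $\dist(C,C')\le k$ has $\leader(C')=l$. In particular, after $D$ phases \emph{every} box knows $l$, and from that point on no box ever changes its leader again, so no box is ever reset to the $\forward$ state after phase $D$. This is the key stabilization fact that makes the rest of the argument a clean two-pass analysis on the static DAG $span(C)$.

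The second step is to analyze the $\backward$ wave. Once all leaders have stabilized (after phase $D$), the sets $\pred(\cdot)$ and $\suc(\cdot)$ also stabilize, and they describe exactly the edges of $span(C)$ (again by Proposition~\ref{prop:leader:basic}): $C''\in\suc(C')$ iff $(C',C'')$ is an edge of $span(C)$. The sinks of $span(C)$ (boxes $C'$ with $\suc(C')=\emptyset$) move to $\backward$ (or $\confirm$) in the first phase after stabilization. Then, by Proposition~\ref{prop:follow}, a box $C'$ moves to $\backward$/$\confirm$ exactly one phase after all boxes in $\suc(C')$ have done so; by induction on the longest path in $span(C)$ from $C'$ to a sink, $C'$ reaches state $\backward$ or $\confirm$ within $\ell(C')+1$ phases of stabilization, where $\ell(C')\le \dist(C,C')\le D$ is that longest path length (every edge of $span(C)$ increases $\dist(C,\cdot)$ by exactly one, so any path from $C'$ to a sink has length at most $D-\dist(C,C')\le D$). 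Hence after at most $D+1$ further phases past stabilization, the source $C$ itself — which has $\pred(C)=\emptyset$ — receives a $\backward$ message and enters state $\confirm$.

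The third step is the $\confirm$ wave outward from $C$. Once $C$ is in $\confirm$, it sends $(\confirm,l)$ to its neighbors; a box enters $\confirm$ one phase after all of its predecessors (in $span(C)$, equivalently the set $\pred(C)$) have sent $\confirm$, and then moves to $\stopp$ the following phase. Since every box $C'$ is at $span(C)$-distance $\dist(C,C')\le D$ from $C$, the $\confirm$ value reaches $C'$ within $\dist(C,C')$ phases of $C$ entering $\confirm$, and $C'$ enters $\stopp$ one phase later. Adding up: the leader wave needs $\le D$ phases to stabilize, the $\backward$ wave needs $\le D+1$ more phases to bring the source to $\confirm$, and the $\confirm$/$\stopp$ wave needs $\le D+1$ more phases to put every box in $\stopp$; a careful bookkeeping of overlaps between these phases (the $\backward$ wave at a box can only start after that box knows $l$, and the confirm wave only starts after $C$ is done) collapses this to the claimed $3D+1$. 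Combined with Lemma~\ref{l:leader:stop}, which guarantees $\stopp$ is absorbing, this proves the statement.

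The main obstacle I expect is the careful phase-counting in the last step: one has to be precise that the three waves are genuinely sequential (the $\backward$ wave at $C'$ cannot begin before $C'$'s leader has stabilized, and the $\confirm$ wave cannot begin before the source finishes the $\backward$ wave), and that distances in $span(C)$ agree with distances in the underlying graph of boxes, so that each wave costs at most $D$ (plus a constant for the state transition through $\waitback$/$\waitconf$) rather than $D$ for each separately-counted reset. Handling the possibility of a box being reset to $\forward$ during the first $D$ phases — and showing this cannot happen after phase $D$ and does not add to the count — is the delicate point, but it is exactly what Lemma~\ref{l:leader:progres} is designed to control.
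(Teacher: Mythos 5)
Your three-wave argument (leader wave via Lemma~\ref{l:leader:progres}, backward wave to the source via Proposition~\ref{prop:follow}, confirm wave outward plus the absorbing-state Lemma~\ref{l:leader:stop}) is exactly the paper's own proof, which establishes the same inductive statements wave by wave to reach the $3D+1$ count. The only blemish is the stray inequality $\ell(C')\le \dist(C,C')$, which should read $\ell(C')\le D-\dist(C,C')$ as your own parenthetical correctly states; since only $\ell(C')\le D$ is used, this does not affect the argument.
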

\begin{proof}
The intuition is that the leader of $C$ is spread over the whole graph
in a wave which gets vertices in distance $d$ from $C$ after $d$ rounds.
Then, the wave of messages with the state $\backward$ goes back
to $C$. Finally, $C$ initiates the third wave of transmissions with
$\confirm$ message. Formally, one can show by induction the following
statements:
\begin{itemize}
\item
$\leader(C')=l_0$ for each box $C'$ after $t$ phases
for each $t\geq D$ (by Lemma~\ref{l:leader:progres});
\item
$\state(C')\in\{\backward,\waitconf,\confirm,\stopp\}$ after
$D+k+1$ phases for each box $C'$ such that $\dist(C,C')\geq D-k$;
(note that a box $C'$ such that $follow_C(C')=\emptyset$ moves
to the state $\backward$ directly after receiving $l_0$);
\item
$\state(C)=\confirm$ after $2D+1$ phases;
\item
$\state(C')=\confirm$ after $2D+k+1$ phases for each $C'$ such
that $\dist(C,C')\leq k$.
\end{itemize}
\end{proof}

\section{Lower Bound for Leader Election}
\label{s:lower}

In this section we prove the lower bound on time complexity
of leader election which holds also for randomized algorithms, i.e.,
we prove Theorem~\ref{th:lower}.

First, we concentrate on deterministic algorithms and we describe ideas
leading to the actual formal proof. Let $A$ be an arbitrary algorithm
for leader election problem.

Recall that $r=(1+\eps)^{-1/\alpha}$ is the largest distance between two stations
$u,v$ such that if $u$ is sending a message, $v$ can hear this message (provided
interferences of other stations are small enough). Let us fix the value of $\Delta$.
Now, we define $\m{F}_{d,\Delta}$ (or simply $\m{F}$),
a family of networks 
on which we analyze the algorithm
$A$, where exact value of the constant $d$ 
will be specified later. Let:
\begin{itemize}
 \item
$Z_0$ and $Z_1$ be two sets of points on the plane,
 where $Z_i=\{z_{i,1},\ldots,z_{i,2\Delta-1}\}$, for $i\in[0,1]$,
 the coordinates of $z_{0,j}$ are $((j-1)\cdot d,0)$ and
 the coordinates of $z_{1,j}$ are $((j-1)\cdot d,r)$.
 \item
 ID of $z_{0,j}$ is $j$ and ID of $z_{1,j}$ is $2\Delta+j$
 for $j\in[2\Delta-1]$;
 \item
 $d$ satisfies the inequality:
 $2\Delta d\leq r$.
\end{itemize}
That is, given a network with stations on positions from $Z_0\cup Z_1$, the sets of
edges of its
communication graph consists of two cliques $Z_0$ and $Z_1$ and the sets
$\{\{z_{0,j},z_{1,j}\}\,|\, j\in[2\Delta-1]\}$.

Let $F(X_0,X_1)$ for $X_i\subseteq Z_i$ be a network which consists of stations
located on positions from $X_0\cup X_1$ and with IDs assigned to these points.
For $X\subseteq Z_i$, we define $p(X)=\{j\,|\, z_{i,j}\in X\}$.
The family $\m{F}$ consists of networks $F(X_0,X_1)$ such that:
\begin{itemize}
 \item
  $X_i\subseteq Z_i$ and $|X_i|=\Delta$ for $i\in[0,1]$;
 \item
  $p(X_0)\cap p(X_1)=1$.
\end{itemize}
That is, the communication graph of $F(X_0,X_1)\in\m{F}$ forms two clicks ($X_0$ and $X_1$)
and one edge $\{z_{0,j},z_{1,j}\}$, where $j$ is the only element of $p(X_0)\cap p(X_1)$
(see Figure~\ref{fig:lower}).
\begin{figure}
\begin{center}
\epsfig{file=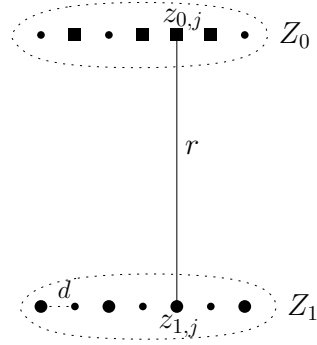, scale=0.8}\vspace*{-2.2ex}
\end{center}
\caption{{\small An example of a network $F(X_0,X_1)\in\m{F}$ for $\Delta=4$. The positions of elements
from $X_0$ and $X_1$ are marked by large squares and large boxes, respectively.}
}
\label{fig:lower}
\vspace*{-1.85ex}
\end{figure}%
As we show later, one can choose the parameter $d$ such that the following properties
are satisfied for each network $F(X_0,X_1)\in \m{F}$ and each time step of any communication
algorithm (see Proposition~\ref{prop:p1p2}):
\begin{description}
 \item[(P1)]
  If at least one station from $X_i$ is sending a message in round $t$, then the result
  of round $t$ for each $v\in X_i$ is independent of the fact whether (and how many)
  stations from $X_{1-i}$ are transmitting messages in $t$.
 \item[(P2)]
  If more than one station from $X_i$ is sending a message in round $t$, then this
  message is not received by any station from $X_{1-i}$.
\end{description}
By {\em result} of a round for a station $v$ we mean here the fact whether $v$ receives or does
not receive a message in the round and, in the former case, the actual message received
by $v$.
Assuming that (P1) is satisfied, each station from $X_i$ for $i\in[0,1]$
is not able to distinguish
between the situation that the network consists of $X_i$ and the situation that the
network consists of $X_0\cup X_1$
until the step in which a station from $X_i$ should receive a message
from $X_{1-i}$ (according to the specification of the algorithm).
On the other hand, by (P2), a message from
$X_{1-i}$ can be received by a station from  $X_{i}$ if exactly one station
from $X_{1-i}$ is transmitting in a round and no station from $X_i$ is transmitting in
that round. Moreover, for each $X_i\subseteq Z_i$, and each $x\in X_i$,
our family $\m{F}$ contains
a network which contains $X_i$ and such $X_{1-i}$ that $x$ is the only element of $X_i$
whose message can be received by a station of $X_{1-i}$. Therefore, in the worst case,
an algorithm can not gain a knowledge whether $X_{1-i}$ is empty or not until each
station from $X_i$ sends a message as the only element of $X_i$. However, this requires
$\Omega(\Delta)$ rounds. Below, we formalize  this intuition, but first we
show that one can
choose $d$ such that (P1) and (P2) are satisfied.

\begin{proposition}\labell{prop:p1p2}
 For each $\Delta$, one can choose $d$ such that properties (P1) and (P2) are satisfied.
\end{proposition}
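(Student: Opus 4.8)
The plan is to choose $d$ large enough that the total interference contributed at any point of $X_i$ by all stations of $X_{1-i}$ is below a threshold that affects neither the SINR condition nor the power-vs-noise condition, regardless of how many of the (at most $\Delta$) stations of $X_{1-i}$ transmit. Recall that the two horizontal lines carrying $Z_0$ and $Z_1$ are at vertical distance $r$, and consecutive points on a line are at horizontal distance $d$, with $2\Delta d \le r$. So a station $v \in X_i$ and a station $z_{1-i,j} \in X_{1-i}$ are at Euclidean distance at least $\sqrt{r^2 + ((j-1)d - x_v)^2}\ge r$, and moreover the $j$-th station of the opposite line is at horizontal offset at least $(|j - j_v|-1)d$ from $v$ (where $j_v$ indexes the column of $v$), so its distance from $v$ is at least $\sqrt{r^2 + (\max(|j-j_v|-1,0)\,d)^2}$.

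First I would bound the aggregate interference $I = \sum_{w \in X_{1-i}} \dist(w,v)^{-\alpha}$ at an arbitrary $v \in X_i$. Using the distance lower bounds above, $I \le \sum_{k=-\Delta}^{\Delta} \bigl(r^2 + (k d)^2\bigr)^{-\alpha/2} \le r^{-\alpha} + 2\sum_{k\ge 1}(kd)^{-\alpha} = r^{-\alpha} + O(d^{-\alpha})$, since $\sum_{k\ge1}k^{-\alpha}=O(1)$ for $\alpha>2$. Actually the dominant $r^{-\alpha}$ term is exactly the signal a single opposite-column partner would deliver, which must be handled separately; the point is that the contribution of the $2\Delta-1$ other opposite-line stations is only $O(d^{-\alpha})$, which I can drive below any prescribed constant $\eta>0$ by taking $d \ge d(\eta,\alpha)$ a suitable constant (for $\alpha>2$; for $\alpha=2$ one would need $d=O(\log n)^{1/\alpha}$, but the proposition as stated only claims existence of $d$).

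With that estimate, (P1) follows: for $v\in X_i$, the received signal from any transmitter in $X_i$ is at distance at most the clique diameter $2\Delta d \le r$, hence at least $r^{-\alpha}$ in strength — while the extra interference injected by $X_{1-i}$ is at most $I \le r^{-\alpha}+\eta$ in the worst case that $z_{1-i,j_v}$ transmits. Since this added interference is bounded by a fixed constant independent of which subset of $X_{1-i}$ transmits, we can pick the model-independent constant $d$ so that both the SINR ratio $\ge \beta$ test and the $\dist^{-\alpha}\ge(1+\eps)\beta\cN$ test yield the same verdict for $v$ whether or not (and however many) stations of $X_{1-i}$ transmit — here we use that at least one station of $X_i$ transmits, so $v$ is either already a non-receiver because of an $X_i$-collision, or the $X_i$-signal dominates by a margin that absorbs the extra $O(\eta)$. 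For (P2): if two or more stations of $X_i$ transmit, then at a station $u \in X_{1-i}$ the two strongest $X_i$-signals are each at distance at most $2\Delta d\le r$ but at least $r$ (the vertical gap), so each is $\Theta(r^{-\alpha})$, and they interfere with each other; choosing $d$ so that the ratio between the nearest and any other $X_i$-transmitter's contribution stays below $\beta$ (again exploiting $2\Delta d\le r$, which bounds horizontal spread relative to the vertical separation) forces the SINR at $u$ below $\beta$, so $u$ hears nothing from $X_i$.

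The main obstacle is the delicate part of (P1)/(P2): the opposite-column partner $z_{1-i,j_v}$ sits at distance exactly $r$, so its signal is comparable to a legitimate within-clique signal, and one must argue it is "absorbed" correctly — i.e., that whenever it would matter, some $X_i$-station is already transmitting and either dominates it or is itself clobbered by an $X_i$-collision, so the presence of $X_{1-i}$ never flips $v$'s outcome. Making this case analysis airtight (distinguishing whether the receiver's own column partner in $X_i$ is among the transmitters) while keeping $d$ a single fixed constant is where the real work lies; the tail-interference bound from the other columns is routine given $\sum k^{-\alpha}=O(1)$.
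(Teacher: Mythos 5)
There is a genuine gap, and it starts with the choice of $d$: your plan is to take $d$ \emph{large} so that the aggregate interference from the opposite line becomes $O(d^{-\alpha})\le\eta$, but the construction forces $d$ to be \emph{small} --- the family $\m{F}$ is defined with $2\Delta d\le r$ (this is what makes each $X_i$ a clique and leaves exactly one cross edge), and $r=(1+\eps)^{-1/\alpha}\le 1$ is a fixed constant. With $d\le r/(2\Delta)$ the quantity $d^{-\alpha}$ is huge, not small, and in any case each opposite-line station sits at distance between $r$ and $\sqrt{2}\,r$ from the receiver, so it contributes $\Theta(1)$ interference no matter what $d$ is; the total cross-line interference can be as large as roughly $(1+\eps)\Delta$ and simply cannot be driven below a small $\eta$. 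Your own P2 sketch implicitly needs $d$ small relative to the vertical separation, so the proposal is also internally inconsistent about which regime of $d$ it lives in. Finally, the point you correctly identify as ``the main obstacle'' --- the opposite-column partner at distance exactly $r$ whose signal is comparable to a legitimate within-clique signal --- is exactly the crux, and it is left unresolved: asserting that the $X_i$-signal ``dominates by a margin that absorbs the extra $O(\eta)$'' is the statement to be proved, not an observation.

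The paper resolves both issues by going in the opposite direction. For (P1) it writes the within-line reception condition as $S_0=-\beta\cN+d^{-\alpha}\bigl(\tfrac{1}{i^{\alpha}}-\beta\sum_{j\in J_1\cup J_2}\tfrac{1}{j^{\alpha}}\bigr)$, notes there are only finitely many configurations $(i,J_1,J_2)$, and lets $c_2>0$ be the smallest nonzero value of the bracketed term and $c_1=\beta\cN\bigl(1+(1+\eps)\Delta\bigr)$ an upper bound on noise plus \emph{all} possible cross-line interference. Choosing $d$ so small that $d^{-\alpha}c_2\ge 2c_1$, the within-line expression is either nonpositive (no reception with or without $X_{1-i}$) or exceeds the worst-case cross-line contribution by the margin $c_1$, so the opposite line can never flip the outcome. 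For (P2) it uses that the unique cross edge has length exactly $r$, so the received power is exactly the threshold $(1+\eps)\beta\cN$; a second transmitter on the same line lies within distance $r+2\Delta d$, and taking $d<\bigl((1/\eps)^{1/\alpha}-(1/(1+\eps))^{1/\alpha}\bigr)/(2\Delta)$ makes its interference exceed $\eps$, pushing the SINR below $\beta$. In short, the proof makes the within-line terms scale like $d^{-\alpha}$ and exploits a discrete minimum margin, rather than trying to make the cross-line interference negligible --- which in this geometry it never is.
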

\begin{proof}
Observe that the distance between $z_{i,j_1}$ and $z_{(1-i),j_2}$ for $i\in[0,1]$ and
$j\in [2\Delta-1]$ is smaller than $r+2\Delta$. Let $F(X_0,X_1)$ and
$\{j\}=p(X_0)\cap p(X_1)$ (i.e, $\{z_{0,j},z_{1,j}\}$ is the only edge between $X_0$
and $X_1$ in the communication graph). If more than one element of $X_{1-i}$ is sending
a message the value $SINR(z_{1-i,j},z_{i,j},\m{T})$ is smaller than
$$\frac{\frac{1}{r^{\alpha}}}{1+\left(\frac1{r+2\Delta\cdot d}\right)^{\alpha}}$$
which is smaller than one provided $d<\left( (1/\varepsilon)^{1/\alpha}-(1/(1+\varepsilon))^{1/\alpha}\right)/(2\Delta)$.
(Recall that $r=(1/(1+\varepsilon))^{1/\alpha}$.)
This proves (P2).

Now, we prove (P1). First, assume that no station from $X_1$ is transmitting a message and
compute SINR
for $x\in X_0$ and a round $t$. Then, the closest to $x$ transmitting station is in distance $id$
for $i\in[1,2\Delta-1)$,
the remaining transmitting stations located to the left of $x$ are in distances
$\{j\cdot d\,|\, j\in J_1\}$ from $x$ and
the remaining transmitting stations located to the left of $x$ are in distances
$\{j\cdot d\,|\, j\in J_2\}$ from $x$, where $J_1,J_2\subseteq [2\Delta-1)$.
Let $$\begin{array}{rcl}
S_0&=&\frac1{(id)^{\alpha}}-\beta(\cN+\sum_{j\in J_1\cup J_2}\frac1{(jd)^{\alpha}})\\
&=&
-\beta\cN+d^{-\alpha}\left(\frac1{i^{\alpha}}-\beta\sum_{j\in J_1\cup J_2}\frac1{j^{\alpha}}\right).
\end{array}$$
Then, $x$ receives a message iff $S_0>0$.

Let $\m{I}=\{(i,J_1,J_2)\,|\, i\in[2\Delta], J_1,J_2\subseteq[2\Delta)\}$.
Let
$$\begin{array}{rcl}
C_0=\{(i,J_1,J_2)\,|\, (i,J_1,J_2)\in\m{I}, \\ \frac1{i^{\alpha}}-\beta\sum_{j\in J_1\cup J_2}\frac1{j^{\alpha}}=0\}
\end{array}$$
and $c_2=\min_{(i,J_1,J_2)\in \m{I}\setminus C_0}(|\frac1{i^{\alpha}}-\beta\sum_{j\in J_1\cup J_2}\frac1{j^{\alpha}}|)$.

Note that stations from $X_1$ can add to the amount of interferences/noise $S_0$
at most the factor $-\beta\cN(1+\varepsilon)\Delta$. Let
$c_1=-\beta\cN(1+(1+\varepsilon)\Delta)$, i.e., $c_1$ accumulates the noise and the largest
possible interference generated by stations from $X_1$.
Thus, $x$ receives a message iff $S_1>0$, where
$S_1$ satisfies
$S_0\geq S_1\geq-c_1+d^{-\alpha}\cdot\left(\frac1{i^{\alpha}}-\beta\sum_{j\in J_1\cup J_2}\frac1{j^{\alpha}}\right)$.
%
Finally, let us choose any $d$ satisfying the inequality $d^{-\alpha}c_2\geq 2c_1$, say
$d=\left(\frac{c_2}{2c_1}\right)^{1/\alpha}$.

Our goal is to show that the additional interference does not change the fact
whether $x$ receives the message. Consider two cases:
\begin{enumerate}
 \item $\frac1{i^{\alpha}}-\beta\sum_{j\in J_1\cup J_2}\frac1{j^{\alpha}}\leq 0$\\
 Then, $S_0<0$ and therefore $S_1\leq S_0<0$
\item $\frac1{i^{\alpha}}-\beta\sum_{j\in J_1\cup J_2}\frac1{j^{\alpha}}> 0$\\
Then $S_1\geq -c_1+d^{-\alpha}c_2=c_1>0$ and $S_0\geq S_1>0$.
\end{enumerate}
\end{proof}

Now, we formalize the idea that the algorithm $A$ needs linear time wrt to $\Delta$
in order to finish leader election.

\comment{For the sake of further generalization of the
result for randomized algorithms, we present our ideas in a formal and possibly
a bit (unnecessarily) complicated way. First, we define a bipartite undirected graph
$H(W_0\cup W_1, E)$ such that:
\begin{itemize}
 \item $W_i=\{X\,|\, X\subseteq Z_i, |X|=\Delta\}$;
 \item $E=\{\{X_0,X_1\}\,|\, X_i\in W_i, |p(X)\cap p(Y)|=1\}$.
\end{itemize}
That is
\begin{enumerate}
 \item there is a $1-1$ mapping between
 edges of $H$ and networks in $\m{F}$:
 an edge $\{X_0,X_1\}$ of $H$ corresponds to a network $F(X_0,X_1)\in\m{F}$;
 \item

\end{enumerate}
}

For the sake of contradiction assume that $A$ performs leader election in time
$\leq \Delta/3$.
Given a network $F(X_0,X_1)\in\m{F}$ we say that a station $x\in X_i$
{\em serves} $F(X,Y)$ if there is a round $t\leq \Delta/3$ such that:
\begin{itemize}
 \item no message from $X_j$ is received by any station of $X_{1-j}$ in rounds
$1,2,\ldots,t-1$ for $j\in[0,1]$;
 \item a message transmitted by $x$ in round $t$ is received by a station from $X_{1-i}$.
\end{itemize}
The set $X_0$ {\em serves} a network $F(X_0,X_1)$ if some $x\in X$ serves $F(X_0,X_1)$.
A network $F(X_0,X_1)$ is {\em served} if it is served by $X_0$ or by $X_1$.
\begin{proposition}\labell{prop:serv}
 If $A$ finishes leader election in $\Delta/3$ rounds then, for each network $F(X_0,X_1)$,
 either $X_0$ or $X_1$ serves this network.
\end{proposition}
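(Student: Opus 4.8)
The plan is to argue by contradiction: suppose $A$ finishes leader election within $\Delta/3$ rounds, and suppose there is a network $F(X_0,X_1)\in\m{F}$ that is \emph{not} served, i.e.\ neither $X_0$ nor $X_1$ serves it. I want to derive a contradiction by constructing a second network in $\m{F}$ that is indistinguishable from $F(X_0,X_1)$ to one of the cliques, yet forces the algorithm into an inconsistent leader decision.

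First I would unpack what ``not served'' means under properties (P1) and (P2). Since $F(X_0,X_1)$ is not served, there is no round $t\le \Delta/3$ at which some station of one clique transmits a message that is received by a station of the other clique while no cross-clique reception has happened earlier; equivalently, in the execution of $A$ on $F(X_0,X_1)$, no station of $X_0$ ever hears a message from $X_1$ or vice versa within the first $\Delta/3$ rounds. By (P1), this means that throughout these rounds the ``view'' (sequence of receive/no-receive events and received messages) of every station in $X_i$ is exactly the same as it would be in the execution of $A$ on the network consisting of $X_i$ alone (a single clique with $\Delta$ stations). Call these the \emph{solo executions}. So within $\Delta/3$ rounds, each station behaves in $F(X_0,X_1)$ exactly as in its solo execution.

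Next I would use the fact that $A$ must solve leader election on the one-clique networks $F(X_0,\emptyset)$ and $F(\emptyset,X_1)$ as well (or more precisely, on the networks in the relevant family obtained by taking just one clique — these are legitimate inputs with node degree $\le \Delta$). In the solo execution on clique $X_i$, by the end of the algorithm \emph{some} station of $X_i$ is elected leader and all of $X_i$ learn its ID. But if $A$ halts within $\Delta/3$ rounds on $F(X_0,X_1)$, then by the indistinguishability just established, at time $\Delta/3$ the stations of $X_0$ have committed to the leader of their solo execution and the stations of $X_1$ have committed to the leader of their solo execution — two \emph{different} IDs, since $X_0$ and $X_1$ are disjoint as sets of stations (their point sets share a position index but the IDs $j$ vs.\ $2\Delta+j$ differ). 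This contradicts the correctness of $A$ on the connected network $F(X_0,X_1)$, where exactly one leader must be chosen network-wide and all nodes must agree on its ID. Hence every $F(X_0,X_1)\in\m{F}$ must be served.

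The main obstacle is making the indistinguishability argument airtight: I must be careful that ``not served'' really does rule out \emph{all} cross-clique reception in the first $\Delta/3$ rounds — in particular, I should note that if more than one station of $X_i$ transmits in a round then by (P2) nothing crosses anyway, and if exactly one transmits and it is heard across, that round witnesses service; so the only way to avoid service is to have no successful cross-clique reception at all. The other delicate point is the inductive claim that the solo and joint executions agree round-by-round: this needs (P1) applied inductively on $t$, using that the stations' transmission decisions at round $t$ depend only on their views through round $t-1$, which by induction coincide with the solo views, so the same stations transmit, and then (P1) guarantees the round-$t$ outcomes inside each $X_i$ coincide as well. I would present this as a short induction and then conclude.
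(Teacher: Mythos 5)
Your proposal is correct and is essentially the paper's own argument: the paper likewise assumes a non-served network, uses (P1)/(P2) to conclude that (w.l.o.g.) the stations of $X_1$ cannot distinguish $F(X_0,X_1)$ from the network consisting of $X_1$ alone within $\Delta/3$ rounds, and derives a contradiction with the assumed running time. The only cosmetic difference is where you place the contradiction (two distinct leaders on the joint network rather than a failed election on the solo clique), and your explicit induction on rounds and the remark that the first cross-clique reception would itself witness service merely spell out details the paper leaves implicit.
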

\begin{proof}
Assume that there is a network $F(X_0,X_1)$ which is not served by $X_0$ nor by $X_1$.
W.l.o.g. assume that the leader choosen by $A$ in $F(X_0,X_1)$ belongs to $X_0$. Then,
after $\Delta/3$ rounds, the stations from $X_1$ cannot distinguish between the network
$F(X_0,X_1)$ and the network which contains merely stations from $X_1$. Thus, no leader
is elected in the latter case up to the round $\Delta/3$.
\end{proof}

Observe that, according to the construction of the family $\m{F}$ and properties
(P1), (P2):
\begin{itemize}
 \item each set $X_i\subseteq Z_i$ of size $\Delta$ 
 serves at most $\Delta/3$ networks;
 \item
 there are ${2\Delta-1 \choose \Delta}\cdot\Delta$ networks in $\m{F}$;
 \item
 there are $2\cdot {2\Delta-1 \choose \Delta}$ sets $X$ such that
 $F(X,Y)\in \m{F}$ or $F(Y,X)\in\m{F}$ for some $Y$.
\end{itemize}
Therefore, by a simple counting argument, there are
at most $2\cdot{2\Delta-1\choose \Delta}\cdot\Delta/3$ networks which are served
and therefore at least
${2\Delta-1 \choose \Delta}\cdot\Delta/3$ networks  $F(X_0,X_1)\in\m{F}$ such
that $F(X_0,X_1)$ is not served 
Thus, by Proposition~\ref{prop:serv},
we get a contradiction with the assumption that $A$ finishes leader election in
$\Delta/3$ rounds.

Now, consider a randomized algorith $A'$.
The probability that the leader is elected in $\Delta/3$ rounds in a network
$F(X,Y)\in\m{F}$ is bounded from
above by the probability that the network $F(X,Y)$ is served.
Our goal is to show that there exists a network $F(X,Y)$ which is not served
with constant probability.
If this is the case, $A'$ finishes its computation in $\geq \Delta/3$ rounds
with constant probability and therefore Theorem~\ref{th:lower} holds.
Let $\Upsilon$ be the space of all possible probabilistic choices made by $A'$.
Let $f=|\m{F}|$ be the number of networks in $\m{F}$.
For a random sequence $\mu\in\Upsilon$, let $n_{\mu}$ be the number of networks
from $\m{F}$ served by the algorithm $A'$. Our proof for deterministic algorithms
shows that $n_{\mu}< f/3$ for each $\mu\in\Upsilon$. Then, the expected number
of networks served by $A'$ is at most
\begin{equation}\label{e:lower}
\sum_{\mu\in\Upsilon} \mbox{Prob}(\mu)\cdot n_{\mu}< \frac{f}{3}.
\end{equation}
Now, we can easily show that there exists a network $F(X,Y)\in\m{F}$ which is served
with probability smaller than $1/3$. Assume that it is not the case
(i.e., each networks in $\m{F}$ is served with pbb larger or equal to $1/3$) and let
$I(X,Y)$ be the indicator random variable equal to $1$ if $F(X,Y)$ is served and
$0$ otherwise.
Then, the expected number of served networks is at least
$$\sum_{F(X,Y)\in\m{F}} \mbox{Prob}(I(X,Y)=1)\geq f/3$$
which contradicts (\ref{e:lower}).

\section{Multi-broadcast problem}\label{s:multi}
In this section we describe the algorithm {\MultiAlg} which solves the multi-broadcast problem
efficiently with help of the backbone structure from Theorem~\ref{th:backbone}.
That is, we prove Theorem~\ref{t:multi:broadcast}.

Recall that we assume that a message sent by a station in a round
may contain at most one of original messages which should be
disseminated to all station and $O(\log N)$ additional ``control'' bits.

First, we run the algorithm which builds the backbone $H$. Using the
data structures build during this algorithm, we can deliver all messages
from a box $C$ to all stations in $C$ for each nonempty box $C$, in
$O(\Delta+k)$ rounds. We describe the remaining part of our algorithm
{\MultiAlg} in terms of dissemination of messages
in a graph of boxes (we can think that messages are already collected ``in
boxes'').

First, we execute the algorithm {\GlobalLeader} from the previous section
which elects the leader in the whole network in time $O(D)$ (provided the
backbone is constructed before). However, we introduce the following
modification to the algorithm.
\comment{
The counter $c$ is associated to each message sent by a box in the
state $\forward$ which is modified according to the following rules:
initially each box $C$
sets its counter $c(C)$ to $0$ and each time it changes the value $\leader(C)$
to the received value $\leader(C')<\leader(C)$, it sets $c(C)$ to $c(C')+1$,
where $c(C')$ is the value of the counter of $C'$ received together with
the value $\leader(C')$.
}
After finishing the leader election algorithm, each box $C$
(except the box containing the leader) chooses one element $C'$
of $\pred(C)$ (e.g., the one with the smallest value of $\leader_0(C')$)
as its predecessor and sets $\pred(C)$ to $\{C'\}$. Then, one multi-round
is executed in which each station $C$ informs its neighbors about the new
value of $\pred(C)$. After the multi-round, each box $C$ changes $\suc(C)$
appropriately (i.e., $\suc(C)$ is equal to boxes $C'$ for which $C$ is the
only element of $\pred(C)$).

Let $C$ be the box containing the leader of the whole network.
Since the original graph defined by sets $\pred(C)$ and $\suc(C)$ contains
exactly such edges $(C_1,C_2)$ that $\dist(C,C_2)=\dist(C,C_1)+1$ and $C_1,C_2$
are neighbors, we obtain a tree $T$ after the above modification,
%
the box $C$ 
is the root of
this tree.
Then, we count the number $k$ of messages to be distributed in $D$ multi-rounds,
by implementing the following simple recursive algorithm.
Let $k(C)$ be the number of messages from stations located originally in $C$.
First, each leaf $C'$ of $T$ sends $k(C')$ to $\pred(C')$. Each box $C'$ which is
not a leaf is waiting until receiving values $k(C'')$ from all elements of $\suc(C')$.
Then, it sets $k(C')\gets k(C')+\sum_{C''\in\suc(C')}k(C'')$ and sends $k(C')$ to
(the only element of) $\pred(C')$.

The actual multi-broadcast task is split into two stages. First, all messages are
collected in the root box $C$ of the tree $T$. In order to accomplish this task,
we apply a simple greedy algorithm for $D+k$ multi-rounds. Each box $C'$
stores the set $R(C')$ initially equal to the set of messages stored in stations
from $C'$ and the set $S(C')$ initially equal to the empty set.
The idea is that $R(C')$ stores messages received by (stations of) $C'$ from its
subtree of $T$ and not sent yet to $\pred(C')$, while $S(C')$ contains messages sent
already to $\pred(C')$.
In each multi-round, if $R(C')$ is not empty then $C'$ chooses arbitrary message $M$
from $R(C')$, removes $M$ from $R(C')$, adds $M$ to $S(C')$ and sends it to $\pred(C')$.
Observe that the task of collecting all messages in $C$ can be expressed
as an instance of the routing problem, where each message $M$ should be
routed from the box $C'$ containing the station in which $M$ is originally stored
to the root $C$ of $T$ along a (unique) path connecting $C'$ and $C$ in $T$.
Cidon et al.\ \cite{CidonKMP95} (Theorem~3.1) showed that such
a greedy
algorithm (that a vertex sends an arbitrary message each time it has message
not transmitted yet) finishes the routing task in $D+k-1$ time, provided
the input graph $G(V,E)$
is leveled (i.e., there exists a labeling $level: V\to [|V|]$ such
that for every directed edge $(u,v)$, $level(u)+1=level(v)$).
A tree is certainly a leveled graph, thus all messages are collected in $C$
after $k+D$ multi-rounds.

After collecting all $k$ messages,
$C$ starts the last stage of the algorithm which just the pipelined
flooding algorithm. That is, in each of the consecutive $k$ multi-rounds,
$C$ sends one of the messages to its neighbors in $T$.
Each box $C'$ in distance $j$ from $C$ receives these messages in rounds
$[j,j+k)$ and sends them to 
$\suc(C')$ in rounds~$[j+1,j+k+1)$.


\comment{ 
With each box, we associate the list $L(C)$ of messages to be sent.
At the beginning, $L(C)$ is equal to all messages from stations located inside
$C$. Moreover, initially empty set $S(C)$ stores messages which were already
sent from $C$ to boxes which are neighbors of $C$ (if a message is sent by $C$,
it is sent in a multi-round to all its neighbors). The algorithm consists of
several repetitions of a multi-round (in which each box can transmit its message
successfully to all its neighbors). During a multi-round, the following actions
are performed in a nonempty box $C$:
\begin{itemize}
\item
if $L(C)$ is not empty, a message from $L(C)$ is transmitted to neighbors of $C$,
removed from $L(C)$, and added to $S(C)$;
\item
new messages receive during the multi-round
are added at the end of the list $L(C)$ (in arbitrary order).
\end{itemize}
Importantly, two rules are always applied:
\begin{itemize}
\item
a message received during a multi-round is added to $L(C)$ only in the case
it was not sent yet by $C$ to its neighbors, i.e., it does not belong to $S(C)$;
\item
$L(C)$ is served according to FIFO policy, i.e., the first message from $L(C)$
is always chosen to be transmitted and removed from $L(C)$.
\end{itemize}
Finally, when $S(C)$ contains already $k$ messages, the box $C$ finishes its
computation.

In the following, we show that the above algorithm finishes multi-broadcast
in $O(n+k)$ time. The idea is that, given a path $P$ on which a message $M$
is transmitted from box $C$ to $C'$, it $M$ is ``waiting'' all together
for $p$ rounds in queues on $P$ before being delivered to $C'$, then $C'$
is transmitting $p$ various messages to its neighbors before it receives
$M$. Therefore, $M$ will wait at $C'$ for at most $k-p$ rounds which implies
that it is delivered to each station in at most $k+n$ rounds. Below, we formalize
and formally prove this intuition.
\tj{upsss.. nadal nie potrafie pokazac poprawnosci; dlatego ponizej troche mniej
efektywna wersja, dla ktorej potrafie}
}

\comment{ 
For the sake of the formal analysis, we modify our algorithm a little which
seems to make it less efficient. Namely, each message is associated with
the ID of its sender (where ID of a box $C$ is minimum over IDs of stations in $C$).
And, two copies of a message $M$ received by $C$ from various neighbors
are considered as different messages. Note that each vertex of
the graph of boxes has constant degree $\leq |\DIR|$,
thus it changes the algorithm such that each box sends each received message
at most $|\DIR|$ times.

\begin{proposition}\labell{prop:wait:ind}
Assume that
\end{proposition}
\begin{proof}
\end{proof}
}

\end{document}